\newsavebox{\measure@tikzpicture}
  \def\tikz@width{#1}%
\DeclareMathOperator*{\CH}{\mathrm{conv}} 
\DeclareMathOperator*{\supp}{supp} 
\newcommand{\oper}[1]{\textsc{#1}}
\newcommand{\atT}[2]{#1^{(#2)}}
\newcommand\numberthis{\addtocounter{equation}{1}\tag{\theequation}}
\title{Computing Shapley Values for Mean Width in 3-D}
\author{Shuhao Tan}{Department of Computer Science, University of Maryland, College Park, USA \and \url{https://www.cs.umd.edu/~shuhao/}}{shuhao@cs.umd.edu}{}{}
\authorrunning{S.\,Tan}
\keywords{Shapley value, mean width, dynamic convolution}
\begin{document}

\maketitle

\begin{abstract}
The Shapley value is a common tool in game theory to evaluate the importance of a player in a cooperative setting. In a geometric context, it provides a way to measure the contribution of a geometric object in a set towards some function on the set. Recently, Cabello and Chan (SoCG 2019) presented algorithms for computing Shapley values for a number of functions for point sets in the plane. More formally, a \emph{coalition game} consists of a set of players $N$ and a characteristic function $v: 2^N \to \mathbb{R}$ with $v(\emptyset) = 0$. Let $\pi$ be a uniformly random permutation of $N$, and $P_N(\pi, i)$ be the set of players in $N$ that appear before player $i$ in the permutation $\pi$. The Shapley value of the game is defined to be $\phi(i) = \mathbb{E}_\pi[v(P_N(\pi, i) \cup \{i\}) - v(P_N(\pi, i))]$. More intuitively, the Shapley value represents the impact of player $i$'s appearance over all insertion orders. We present an algorithm to compute Shapley values in 3-D, where we treat points as players and use the mean width of the convex hull as the characteristic function. Our algorithm runs in $O(n^3\log^2{n})$ time and $O(n)$ space. Our approach is based on a new data structure for a variant of the dynamic convolution problem $(u, v, p)$, where we want to answer $u\cdot v$ dynamically. Our data structure supports updating $u$ at position $p$, incrementing and decrementing $p$ and rotating $v$ by $1$. We present a data structure that supports $n$ operations in $O(n\log^2{n})$ time and $O(n)$ space. Moreover, the same approach can be used to compute the Shapley values for the mean volume of the convex hull projection onto a uniformly random $(d - 2)$-subspace in $O(n^d\log^2{n})$ time and $O(n)$ space for a point set in $d$-dimensional space ($d \geq 3$).
\end{abstract}

\section{Introduction}
Given a point set $P$ in $d$-dimensional space, many different functions can be applied to extract information about the set's geometric structure. Often, this involves properties of the convex hull of the set, such as its surface area and mean width. A natural question in this regard is the ``impact'' that any given point of $P$ has on the quantity of interest. One method for modeling the notion of impact arises from the context of cooperative games in game theory. In particular, Shapley values are a standard way to provide a distribution with \textit{fairness} in some sense. In this paper, we present an efficient algorithm to compute the Shapley values for points in 3-D where the payoff of a point set is defined as the mean width of its convex hull.

Formally, a \emph{coalition game} consists of a set of players $N$ and a characteristic function $v: 2^N \to \mathbb{R}$ with $v(\emptyset) = 0$. In our setting, we treat a point set $N\subset\mathbb{R}^3$ as the set of players and consider the characteristic function $v(Q)=w(\CH(Q))$ for $Q\subset N$, where $\CH(Q)$ denotes the convex hull of $Q$, and $w(\CH(Q))$ denotes the mean width of $\CH(Q)$. (Formal definitions will be given in \autoref{sec:prelim}.)

Let $\pi$ be a uniformly random permutation of $N$, and $P_N(\pi, i)$ be the set of players in $N$ that appear before player $i$ in the permutation $\pi$. The \emph{Shapley value} of player $i \in N$ is defined to be 
\begin{equation}
    \phi(i) = \mathbb{E}_\pi[v(P_N(\pi, i) \cup \{i\}) - v(P_N(\pi, i))].
\end{equation}
Intuitively, the Shapley value represents the expected marginal contribution of $i$ to the objective function over all permutations of $N$. There are wide applications of Shapley values. A survey by Winter \cite{WINTER20022025} and a book dedicated to this topic \cite{roth_1988} provide insights on how the concept can be interpreted and applied in multiple ways, such as utility of players, allocation of resources of the grand coalition and measure of power in a voting system. Moreover, the values can be characterized axiomatically, making it the only natural quantity that satisfies certain properties. More details can be found in standard game theory textbooks (\cite{Hart1989}, \cite[Chapter~5]{chakravarty2015course}, \cite[Section~9.4]{Myerson_game_theory_book}).

In convex geometry and measure theory, \emph{intrinsic volumes} are a key concept to characterize the ``size'' and ``shape'' of a convex body regardless of the translation and rotation in its underlying space. For example, Steiner's formula \cite{Gray2004} relates intrinsic volumes to the volume of the Minkowski-sum of a convex body and a ball. In general, one can define \emph{valuation} to be a class of  measure-like maps on open sets in a topological space. A formal definition of this concept can be found in \cite{Alvarez_valuation, Huber1993}. Functions such as volume and surface area fall into this class. Hadwiger's Theorem \cite{hadwiger2013vorlesungen, klain_1995_hadwiger} asserts that every continuous valuation  is a linear combination of intrinsic volumes. In $d$-dimensional space, the $d$-th, $(d-1)$-st and first intrinsic volumes are proportional to the usual Lebesgue measure, the surface area and the mean width, respectively \cite[Chapter~4]{schneider_2013}. Cabello and Chan \cite{Cabello19} presented efficient algorithms to compute Shapley values for area and perimeter for a point set in 2-D, which can be naturally extended to volume and surface area in 3-D. An algorithm that efficiently computes Shapley values for mean width in 3-D will then imply an algorithm that efficiently computes Shapley values for any continuous valuation in 3-D.

\paragraph*{Related work}
The problem of computing Shapley values for area functions on a point set was recently introduced by Cabello and Chan \cite{Cabello19}. They provided algorithms to compute Shapley values for area of convex hull, area of minimum enclosing disk, area of anchored rectangle, area of bounding box and area of anchored bounding box for a point set in 2-D. They also gave algorithms for perimeters by slightly modifying the algorithms. All the quantities they considered are defined in 2-D space. Although their algorithms naturally extend to higher dimension, there are interesting unique quantities in higher dimension that has yet been explored. The mean width considered in this paper is one such example.

Cabello and Chan also drew connections between computing Shapley values and stochastic computational geometry models. There have been many studies on the behavior of the convex hull under unipoint model where each point has an existential probability, for example see \cite{Agarwal2014, fink2016hyperplane, huang2016, Loffler2008, Suri2013, Jie2017_stochastic_CH}. In particular, Xue et al. \cite{Jie2017_stochastic_CH} discussed the expected diameter and width of the convex hull. Huang et al. \cite{huang2016} presented a way to construct $\epsilon$-coreset for directional width under the model. 

Mean width is often considered in the context of random polytopes in stochastic geometry. M{\"u}ller \cite{Muller1989} showed the asymptotic behavior of the mean width for a random polytope generated by sampling points on a convex body. B{\"o}roczyky \cite{BOROCZKY20092287} refined the result under the assumption that the convex body is smooth. Alonso-Guti{\'e}rrez and Prochno \cite{alonso2015gaussian} considered the case when the points are sampled inside an isotropic convex body. However, these results only consider the statistics when the number of points is large and the results are asymptotic. This paper views mean width on a computational perspective instead.

\paragraph*{Our contributions}
We show that the Shapley values for mean width of the convex hull for a point set in 3-D can be computed in $O(n^3\log^2{n})$ time and $O(n)$ space. We take a similar approach of computing area of convex hull in Cabello and Chan's paper, in the sense that we look at the incremental formation of convex hull and consider the contribution of individual geometric objects. In our case, we break down the mean width and express that in terms of quantities only related to edges and apply linearity of Shapley values.

A major difference is that the expression for mean width contains angle at the edge, making the calculation for the probability term for Shapley values depends on the intersection of two half-spaces as opposed to only one half-space. This prompts an expression that looks like convolution, but gets slightly changed when evaluated from one point to another. The setup can be captured as an instance of dynamic convolution, where one can change the convolution kernel at any position, and query any single position in the convoluted vector. Algebraic computation of this form was explored by Reif and Tate \cite{Reif97}. Frandsen et al. \cite{Frandsen01} gave a worst-case lower bound of $\Omega(\sqrt{n})$ time per operation for this problem. By exploiting the structure of sweeping by polar angle, we obtain a variant of dynamic convolution, where we have a pointer to the convolution kernel and another pointer to the convoluted result. We are only allowed to query at the pointer, update the convolution kernel at the pointer and move the pointers by one position. We present an online data structure for this variant that has $O(n\log^2{n})$ overall time for $n$ operations. There are some occurrences of algebraic computation in computational geometry, but many works \cite{Aronov2018, Cabello19, Deepak2007, Langerman2003} don't have such dynamic setting and rely mostly on computing a single convolution or multi-point evaluation of polynomials. Only a few (see, e.g., \cite{CHAN2010243}) employed a dynamic data structure. We believe our data structure is of independent interest for algorithms based on sweeping.

\section{Preliminaries}\label{sec:prelim}
\subparagraph*{Mean width}
The mean width of a compact convex body $X$ can be seen as the mean $1$-volume of $X$ projected on a uniformly random $1$-subspace. More formally, let $X\subset \mathbb{R}^d$ be a compact convex body. For $1\leq s\leq d$, the \emph{mean $s$-projection} of $X$ is defined as:
\begin{equation}
    M_s(X) = \int_{Q^s}{|X_u|f_s(u)\,du},
\end{equation}
where $Q^s$ is the set of $s$-subspace, $X_u$ is the projection of $X$ on $u$, $|\cdot|$ is the volume or the canonical measure in the underlying space, $f_s$ is the probability density function for uniformly sampling $s$-subspace from $Q^s$.

In this manner, the mean width of a point set $P$ can be defined as $M_1(\CH(P))$ where $\CH(P)$ is the convex hull of $P$.

It turns out that the mean $s$-projection of a convex polytope ($1\leq s\leq d-2$) can be decomposed into values only related to its $s$-facets.

Let $X\subset \mathbb{R}^d$ be a convex polytope. Let $V$ be a $s$-facet of $X$. Let $Q(V)$ be the set of half-spaces such that $V$ is on the hyper-planes defining the half-space. The \emph{exterior angle} of $V$ is defined as $\psi(V)=\frac{|\{q\in Q(V):X\subset q\}|}{|Q(V)|}$. Let $n(q)$ be the normal vector contained by the half-space $q$. The \emph{interior angle} of $V$ is defined as $\chi(V)=\frac{|\{q\in Q(V):\exists p\in V, \exists\epsilon > 0, p+\epsilon n(q)\in X|}{|Q(V)|}$.

These two definitions are generalizations of the angles formed by two adjacent edges of a convex polygon in 2-D. Let $p$ be a vertex of a convex polygon in 2-D with edges $e_1$, $e_2$ connected to it. Let $u$, $v$ be two unit vector parallel to $e_1$, $e_2$ respectively and away from $p$. The angle at $p$ is $\theta(p)=\arccos(u\cdot v)$. It is easy to see that the interior angle is indeed $\chi(p)=\theta(p)/2\pi$.

Moreover, when $s=d-2$, and $V\neq X$, we have $\chi(V)+\psi(V)=1/2$ for all $s$-facet $V$ of $X$.
\begin{lemma}[Miles \cite{Miles69}]\label{lemma:mean_content_formula}
Let $X\subset \mathbb{R}^d$ be a convex polytope, and $1\leq s\leq d-2$. Let $V_s(X)$ be the set of $s$-facet of $X$. We have:
\begin{equation}
    M_s(X)=\frac{\Gamma(\frac{s+1}{2})\Gamma(\frac{d-s+1}{2})}{\sqrt{\pi}\Gamma(\frac{d+1}{2})}\sum_{V\in V_s(X)}|V|\psi(V).
\end{equation}
\end{lemma}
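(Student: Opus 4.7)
The plan is to derive this identity by combining two classical pillars of integral geometry: a Kubota-type invariance argument that reduces $M_s$ to an intrinsic volume up to a single dimension-dependent constant, and the face decomposition of intrinsic volumes of polytopes.

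First I would verify that $X\mapsto M_s(X)$ is a translation-invariant, rotation-invariant, continuous valuation of degree $s$ on convex bodies in $\mathbb{R}^d$. Valuation-ness (i.e. $M_s(A\cup B)+M_s(A\cap B)=M_s(A)+M_s(B)$ whenever $A, B, A\cup B$ are all convex) and translation invariance are inherited pointwise from the projected volume $|X_u|$ for each fixed $u\in Q^s$; rotation invariance comes from the $O(d)$-invariance of the density $f_s$. By Hadwiger's classification theorem (already cited as \cite{hadwiger2013vorlesungen,klain_1995_hadwiger}) applied to the subspace of degree-$s$ valuations, such a functional must be a scalar multiple $c_{d,s}\,V_s(X)$ of the $s$-th intrinsic volume.

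Next I would expand $V_s$ for a polytope using the Steiner formula $|X+rB^d|=\sum_{k=0}^{d}|B^{d-k}|\,r^{d-k}\,V_k(X)$ and decompose the ``shell'' $(X+rB^d)\setminus X$ geometrically: for each face $F$ of $X$, the portion of the shell directly over $F$ is a prism $F\times (N(F)\cap rB^d)$, where $N(F)$ is the outward normal cone of $F$. For an $s$-face $V$ the normal cone is $(d-s)$-dimensional and fills a solid-angle fraction $\psi(V)$ of $rB^d$, so the prism has measure $|V|\cdot|B^{d-s}|\,r^{d-s}\psi(V)$. Matching the $r^{d-s}$-coefficient yields $V_s(X)=\sum_{V\in V_s(X)}|V|\psi(V)$.

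Combining the two steps gives $M_s(X)=c_{d,s}\sum_V |V|\psi(V)$. To pin down $c_{d,s}$ I would plug in a calibration body: the unit Euclidean ball $B^d$, for which $M_s(B^d)=|B^s|$ (every projection is a copy of $B^s$) and $V_s(B^d)=\binom{d}{s}|B^d|/|B^{d-s}|$ by the standard evaluation. Substituting $|B^k|=\pi^{k/2}/\Gamma(k/2+1)$ and applying the Gamma duplication formula recovers exactly $\Gamma(\tfrac{s+1}{2})\Gamma(\tfrac{d-s+1}{2})/(\sqrt{\pi}\,\Gamma(\tfrac{d+1}{2}))$. The main obstacle is precisely this constant: a direct Grassmannian integration of $\int_{G(d,s)}|X_u|\,du$ would require careful normalization of Haar measure on $G(d,s)$ and an evaluation of an average Jacobian of orthogonal projection, but the calibration-body detour sidesteps all of that and reduces the constant to an elementary Gamma-function identity.
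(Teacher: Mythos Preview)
The paper does not prove this lemma; it is stated with attribution to Miles \cite{Miles69} and used as a black box. Your proposal therefore cannot be compared against a proof in the paper, but it stands on its own as a correct and clean argument: the Hadwiger-characterization step, the Steiner-formula face decomposition $V_s(X)=\sum_{V}|V|\psi(V)$, and the ball calibration all check out (your constant computation via the duplication formula is exactly right). This route is somewhat different in spirit from Miles' original 1969 derivation, which proceeds by direct Cauchy--Kubota integration over the Grassmannian rather than invoking Hadwiger's theorem; your approach trades that explicit integral-geometric computation for a soft uniqueness argument plus a single calibration, which is arguably more transparent for the constant.
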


\subparagraph*{Random permutations}
When considering random permutations, it is common to compute probabilities where constraints on the order of appearance of disjoint sets are imposed. The following lemma follows from simple counting where a proof can be found in \cite{Cabello19}.
\begin{lemma}\label{lemma:set_probability_formula}
Let $N$ be a set with $n$ elements. Let $\{x\}$, $A$ and $B$ be disjoint sets. The probability that all of $A$ appears before $x$ and all of $B$ appears after $x$ in a uniformly random permutation $\pi$ is $\frac{|A|!|B|!}{(|A|+|B|+1)!}$.
\end{lemma}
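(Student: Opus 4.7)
The plan is to reduce the claim to counting orderings of the $|A|+|B|+1$ elements in $A\cup\{x\}\cup B$, since the event in question depends only on the relative order of these elements inside $\pi$.

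First I would argue that the extra elements of $N\setminus(A\cup\{x\}\cup B)$ play no role: whether ``all of $A$ precedes $x$ and all of $B$ follows $x$'' holds under $\pi$ is determined by the induced permutation of $\pi$ on $A\cup\{x\}\cup B$. By symmetry of the uniform distribution, this induced permutation is itself uniformly distributed over the $(|A|+|B|+1)!$ orderings of that set.

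Next I would count favorable orderings. In any favorable ordering, the elements of $A$ occupy the first $|A|$ positions (in some order), then $x$ sits in the next position, and the elements of $B$ fill the remaining $|B|$ positions (in some order). This gives exactly $|A|!\cdot|B|!$ favorable orderings, so the probability equals
\begin{equation*}
    \frac{|A|!\,|B|!}{(|A|+|B|+1)!},
\end{equation*}
as claimed.

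The only step requiring any care is the first one — formally justifying that the induced permutation is uniform — but this is immediate because the uniform distribution on $S_n$ projects to the uniform distribution on permutations of any fixed subset. I do not expect any genuine obstacle, which matches the fact that the excerpt attributes the proof to \cite{Cabello19} as a ``simple counting'' argument.
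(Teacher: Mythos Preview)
Your argument is correct and is exactly the ``simple counting'' approach the paper alludes to (the paper itself does not spell out a proof, deferring to \cite{Cabello19}). Restricting to the induced ordering on $A\cup\{x\}\cup B$ and counting the $|A|!\,|B|!$ favorable arrangements out of $(|A|+|B|+1)!$ is the intended argument.
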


\subparagraph*{Assumptions}
We assume points in 3-D are of general positions where no three points are co-linear and no four points are co-planar. All points are assumed distinct. Throughout the paper, we often look at the projection of a 3-D point $p$ on a plane, we will still use the same symbol $p$ when looking on the plane for simplicity. We also assume a unit-cost real-RAM model of computation.

\section{Dynamic Convolution with Local Updates and Queries}
Before presenting the algorithm to compute mean width, we first present a data structure which is central to the algorithm.

We consider a variant of dynamic convolution where only local updates and queries are permitted. More formally, let $g:\mathbb{Z}\to\mathbb{R}$ be a fixed function where we can evaluate $g(x)$ in constant time for any $x$. Let $f:\mathbb{Z}\to\mathbb{R}$ initially be a zero function where we can change its values later. Let $p, c\in\mathbb{Z}$ be two variables initially $0$. We want to design a data structure that supports the following operations:

\begin{tabular}{llll}
    \oper{Query} & \oper{Update($x$)} & \oper{IncP} & \oper{DecP}  \\ \hline
     \quad Output $\sum_{i\in\mathbb{Z}} f(i + c)g(i)$ & \quad $f(p)\leftarrow f(p) + x$ & \quad $p\leftarrow p+1$& \quad $p\leftarrow p-1$\\[5pt]
    \oper{RotateLeft} & \oper{RotateRight}\\  \hline
    \quad $c\leftarrow c + 1$ & \quad $c\leftarrow c - 1$
\end{tabular}

Let $\atT{f}{i}$ be the $f$ after the $i$-th operation, so $\atT{f}{0}=0$. And similarly for $\atT{p}{i}$, and $\atT{c}{i}$. We can make the following observations:
\begin{lemma}\label{lemma:compact_support}
For any $0\leq i<j$, $\max\supp(\atT{f}{j}-\atT{f}{i}) - \min\supp(\atT{f}{j}-\atT{f}{i})\leq j-i$, where $\supp(f)$ is the support of $f$.
\end{lemma}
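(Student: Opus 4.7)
The only operation that alters $f$ is \oper{Update($x$)}, which modifies $f$ at the current pointer position $p$. So any $x$ in $\supp(\atT{f}{j}-\atT{f}{i})$ must equal $\atT{p}{t-1}$ for some $t\in\{i+1,\ldots,j\}$ at which the $t$-th operation is an \oper{Update}. Therefore the support of the difference is contained in the set
\[
S \;=\; \{\,\atT{p}{t}\;:\;i\le t\le j-1\,\}.
\]
The plan is to show that $\max S - \min S \le j-i$, which clearly implies the claimed bound on the support.

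The remaining ingredient is the simple observation that $p$ changes by at most one per operation: only \oper{IncP} and \oper{DecP} change $p$, each by exactly $\pm 1$, while \oper{Query}, \oper{Update}, \oper{RotateLeft}, and \oper{RotateRight} leave $p$ unchanged. Hence for any two indices $s,t$ with $i\le s\le t\le j-1$, we have $|\atT{p}{t}-\atT{p}{s}|\le t-s\le j-i$, since each of the at most $t-s$ intervening operations contributes at most one unit to $|\atT{p}{t}-\atT{p}{s}|$. Taking $s$ and $t$ to be the indices achieving $\min S$ and $\max S$ respectively gives $\max S - \min S \le j-i$.

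I would write this as a two-sentence argument: first, identify $S$ as the superset of the support coming from the definition of \oper{Update}; second, bound the diameter of $S$ by noting that $p$ evolves by unit steps and there are only $j-i$ operations in total. There is no real obstacle; the lemma is essentially a bookkeeping statement that justifies later locality/compactness arguments about how far apart the supports of incremental changes to $f$ can be. The only minor care needed is the off-by-one convention in which $\atT{p}{t-1}$ (not $\atT{p}{t}$) is the position targeted by the $t$-th \oper{Update}, but this has no effect on the final bound.
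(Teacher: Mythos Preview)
Your proof is correct and follows essentially the same approach as the paper: both observe that only \oper{Update} modifies $f$ at the current pointer $p$, so the support of $\atT{f}{j}-\atT{f}{i}$ is contained in the set of pointer positions visited between steps $i$ and $j$, and then bound the diameter of that set by the number of intervening operations (since $p$ moves by at most one per step). Your handling of the off-by-one convention is slightly more explicit than the paper's, but the argument is the same.
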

\begin{proof}
There are at most $j-i$ \oper{IncP} or \oper{DecP} operations between the $i$-th and the $j$-th operation, so $\max_{i\leq k\leq j}\atT{p}{k} - \min_{i\leq k\leq j}\atT{p}{k}\leq j-i$. And we have $\max\supp(\atT{f}{j}-\atT{f}{i})\leq \max_{i\leq k\leq j}\atT{p}{k}$ and $\min\supp(\atT{f}{j}-\atT{f}{i})\geq \min_{i\leq k\leq j}\atT{p}{k}$.
\end{proof}
\begin{corollary}\label{corollary:small_support}
For any $0\leq i<j$, $|\supp(\atT{f}{j}-\atT{f}{i})|\leq j-i$.
\end{corollary}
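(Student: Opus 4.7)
My plan is to prove the corollary directly by counting \oper{Update} operations, rather than by trying to extract it from \autoref{lemma:compact_support} (which only gives a diameter bound of $j-i$, yielding $|\supp|\leq j-i+1$, which is weaker than what is claimed).

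The key observation is that among the six operations supported by the data structure, only \oper{Update} actually modifies the function $f$; the operations \oper{Query}, \oper{IncP}, \oper{DecP}, \oper{RotateLeft}, and \oper{RotateRight} leave $f$ unchanged (they only read $f$ or modify the variables $p$ and $c$). Moreover, each \oper{Update} operation changes $f$ at a single integer position, namely the current value of $p$.

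First, I would let $k$ denote the number of \oper{Update} operations performed strictly among the $(i+1)$-st through $j$-th operations. Since every operation is of one of the six types, we trivially have $k\leq j-i$. Next, I would observe that $\atT{f}{j}-\atT{f}{i}$ is a sum of the increments produced by exactly these $k$ updates, each supported at a single point, so $\supp(\atT{f}{j}-\atT{f}{i})$ is contained in the union of these $k$ singleton sets. Therefore $|\supp(\atT{f}{j}-\atT{f}{i})|\leq k\leq j-i$, completing the proof.

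I do not expect any obstacles: the argument is a one-line counting bound. The only subtlety worth flagging is that the bound is tighter than what \autoref{lemma:compact_support} alone yields, which is why I avoid routing the proof through that lemma and instead use the structural fact that non-\oper{Update} operations do not touch $f$.
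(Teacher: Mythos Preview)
Your proof is correct. The paper states the result as a corollary of \autoref{lemma:compact_support} without supplying any argument; as you correctly note, the lemma's diameter bound $\max\supp-\min\supp\leq j-i$ by itself only gives $|\supp|\leq j-i+1$. Your direct counting argument---at most $j-i$ operations occur between the $i$-th and $j$-th, only \oper{Update} modifies $f$, and each \oper{Update} touches a single position---is the natural way to obtain the stated bound $j-i$, and it does not go through \autoref{lemma:compact_support} at all. In the paper the corollary is only used to bound array sizes by $O(2^{i^*})$, so the off-by-one distinction is immaterial for the algorithm; but your argument is the one that actually establishes the inequality as written.
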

\begin{lemma}\label{lemma:limited_c}
For any $0\leq i<j$, $\max_{i\leq k\leq j}\atT{c}{k} - \min_{i\leq k\leq j}\atT{c}{k}\leq j-i$.
\end{lemma}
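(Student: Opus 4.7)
The plan is to mirror the argument used in \autoref{lemma:compact_support}, but applied to the variable $c$ instead of $p$. The observation is that $c$ is the only state that the operations \oper{RotateLeft} and \oper{RotateRight} modify, and each such operation changes $c$ by exactly $\pm 1$. All other operations (\oper{Query}, \oper{Update}, \oper{IncP}, \oper{DecP}) leave $c$ unchanged. Hence between indices $i$ and $j$ there can be at most $j-i$ operations that affect $c$, and each of them contributes $\pm 1$.

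Concretely, I would proceed as follows. First, note that for any $i\leq k<k+1\leq j$, $|\atT{c}{k+1}-\atT{c}{k}|\leq 1$, since the $(k+1)$-st operation is one of the six listed, and only the two rotation operations alter $c$ (by $1$ in absolute value). Second, pick indices $k^\ast,k_\ast\in[i,j]$ achieving the maximum and minimum values of $\atT{c}{k}$ over that range. Without loss of generality assume $k_\ast\leq k^\ast$. Then a telescoping bound gives
\begin{equation*}
\atT{c}{k^\ast}-\atT{c}{k_\ast}=\sum_{k=k_\ast}^{k^\ast-1}\bigl(\atT{c}{k+1}-\atT{c}{k}\bigr)\leq k^\ast-k_\ast\leq j-i,
\end{equation*}
and symmetrically if $k^\ast\leq k_\ast$. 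This is exactly the claimed inequality.

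There is no real obstacle here; the proof is a direct one-line telescoping argument once one observes that each operation changes $c$ by at most $1$. The only minor care needed is to handle both orderings of the maximizer and minimizer within $[i,j]$, but this is symmetric. The lemma is essentially the $c$-analogue of \autoref{lemma:compact_support}'s intermediate observation on $p$, and it will later be used in tandem with \autoref{corollary:small_support} to bound how much of the convolution landscape the data structure can have touched between two queries.
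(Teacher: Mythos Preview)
Your proposal is correct and takes essentially the same approach as the paper. The paper's proof is the one-liner ``There are at most $j-i$ \oper{RotateLeft} or \oper{RotateRight} operations between the $i$-th and the $j$-th operation,'' which is exactly the observation underlying your telescoping bound; your version just spells out the intermediate step $|\atT{c}{k+1}-\atT{c}{k}|\leq 1$ and handles the maximizer/minimizer explicitly.
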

\begin{proof}
There are at most $j-i$ \oper{RotateLeft} or \oper{RotateRight} operations between the $i$-th and the $j$-th operation.
\end{proof}
\begin{lemma}\label{lemma:linearity_of_dot_product}
For any $0\leq i<j$, $\sum_k \atT{f}{j}(k+c)g(k)=\sum_k \atT{f}{i}(k+c)g(k) + \sum_k (\atT{f}{j}-\atT{f}{i})(k+c)g(k)$.
\end{lemma}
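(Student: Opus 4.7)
The plan is to recognize that this lemma is essentially just the distributivity of summation over addition, so the only real content is to certify that all three sums in the statement are finite and can be manipulated without convergence issues.

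First, I would note the pointwise identity $\atT{f}{j}(x) = \atT{f}{i}(x) + (\atT{f}{j}-\atT{f}{i})(x)$ for every $x\in\mathbb{Z}$, which is immediate since $\atT{f}{j}-\atT{f}{i}$ is defined as a difference of functions. Substituting $x=k+c$ and multiplying both sides by $g(k)$ gives a termwise identity indexed by $k$.

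Next, I would argue that each of the three sums is in fact a finite sum, so that termwise addition can be carried out without worrying about convergence. Applying \autoref{corollary:small_support} with $i=0$, and recalling that $\atT{f}{0}\equiv 0$, yields $|\supp(\atT{f}{i})|\leq i$ and $|\supp(\atT{f}{j})|\leq j$; the corollary itself directly gives $|\supp(\atT{f}{j}-\atT{f}{i})|\leq j-i$. In each case $g$ is multiplied by a function of finite support, so at most finitely many terms are nonzero in each sum.

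With finiteness established, I would just split the left-hand side:
\begin{equation*}
\sum_k \atT{f}{j}(k+c)g(k) = \sum_k \bigl(\atT{f}{i}(k+c) + (\atT{f}{j}-\atT{f}{i})(k+c)\bigr)g(k),
\end{equation*}
and distribute to obtain the right-hand side. The main obstacle here is purely notational rather than mathematical, namely justifying that the sums over $k\in\mathbb{Z}$ can be treated as finite; the support bounds inherited from \autoref{lemma:compact_support} and \autoref{corollary:small_support} handle this cleanly.
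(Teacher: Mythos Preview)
Your proposal is correct and matches the paper's treatment: the paper states this lemma without proof, evidently regarding it as immediate linearity of summation. Your additional care in invoking \autoref{corollary:small_support} to certify that all three sums have finite support is a reasonable (if slightly more explicit than necessary) justification.
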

From \autoref{lemma:linearity_of_dot_product}, we can see that it is possible to break the operations into chunks, and only consider the changes on $f$ for each chunk. \autoref{lemma:compact_support} ensures that the actual differences are proportional to the size of a chunk. So the idea is: after a chunk of size $k$ is finished, we pre-compute all queries for the next $k$ operations by considering all possible changes in $c$. We incrementally build different levels of chunks so that after $k$ operations after a chunk of size $k$, we merge it with smaller chunks to build a larger chunk. \autoref{lemma:limited_c} and \autoref{corollary:small_support} ensure that the prediction only depends on a small set of values of $g$.

\begin{figure}[t]
    \centering
    \begin{tikzpicture}
    \tikzset{box/.style={rectangle,draw=black, very thick, minimum size=0.8cm}}
    
    \foreach \x in {1,...,8} {
        \node[box, fill=cyan!70] at (\x, 0) {$\x$};
    }
    \foreach \x in {9,...,12} {
        \node[box, fill=red!50] at (\x, 0) {$\x$};
    }
    \node[box, fill=yellow!50] at (13, 0) {$13$};
    
    \draw[decorate,decoration={brace,mirror}, thick] (.7, -.7) -- node[below]{$H(3)=\atT{f}{8}-\atT{f}{0}$} (8.3, -.7);
    \draw[decorate,decoration={brace,mirror}, thick] (8.7, -.7) -- node[below]{$H(2)=\atT{f}{12}-\atT{f}{8}$} (12.3, -.7);
    \draw[decorate,decoration={brace}, thick] (12.7, .7) -- node[above, xshift=-1cm]{$H(0)=\atT{f}{13}-\atT{f}{12}$} (13.3, .7);
    \end{tikzpicture}
    \caption{An example for $n=13=2^3+2^2+2^0$. Each box represents an operation.}
    \label{fig:dynamic_convolution_chunk}
\end{figure}
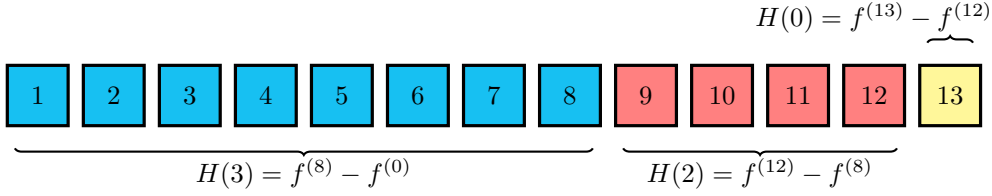

More specifically, we maintain chunks of changes based on the binary representation of the number of operations we have seen so far. Assume we've just performed the $k$-th operation. Let $b(k, i)$ be the $i$-th bit from right in the binary representation of $k$. We can write $k=\sum_{i:b(k,i)=1}2^i$ and we are going to use chunks of size $2^i$ for $b(k, i)=1$, where larger chunks are closer to the start of operations. Let $i_1 < i_2< \dots< i_q$ be the $i$'s such that $b(k, i)=1$, and $d(l)=\sum_{j=1}^{l}2^{i_j}$. Let $s(j)=k-d(j)$ and $e(j)=k-d(j-1)$. $(s(j), e(j)]$ represents the range of the $j$-th chunk. In other words, we split the operations into chunks of size $2^{i_q}, 2^{i_{q-1}},\dots, 2^{i_1}$. We maintain chunks of changes as $H(i_j)=\atT{f}{e(j)}-\atT{f}{s(j)}$. See \autoref{fig:dynamic_convolution_chunk} for an example. Moreover, for each chunk, we maintain an array of predicted queries $A(i_j)$ on the chunk $H(i_j)$, so that $A(i_j, 2^{i_j}+c)=\sum_p H(i_j)(p+\atT{c}{e(j)}+c)g(p)$ for $-2^{i_j}\leq c\leq 2^{i_j}$. We also maintain an array of $I$ so that $I(i_j)=\atT{c}{k}-\atT{c}{e(j)}$.

Assume we have $A$, $H$, and $I$ after $n$ operations, the $(n+1)$-st operation can be performed easily: \oper{Query} should output $\sum_p \atT{f}{n}(p + \atT{c}{n})g(p)$, and we have:
\begin{align*}
    \sum_p \atT{f}{n}(p + \atT{c}{n})g(p) &= \sum_p \left(\sum_{i_j:b(n, i_j)=1} \atT{f}{e(j)}-\atT{f}{s(j)}\right)(p + \atT{c}{n})g(p) \\
    &= \sum_{i_j:b(k, i_j)=1} \sum_p (\atT{f}{e(j)}-\atT{f}{s(j)})(p+\atT{c}{e(j)}+I(i_j))g(p)\\
    &= \sum_{i_j:b(k, i_j)=1} \sum_p H(i_j)(p+\atT{c}{e(j)}+I(i_j))g(p)\\
    &= \sum_{i_j:b(k, i_j)=1} A(i_j, 2^{i_j}+I(i_j)).
\end{align*}

So we can answer query by performing a summation over $A$. \oper{Update} is handled by simply documenting the change. \oper{IncP} and \oper{DecP} are handled by incrementing and decrementing $p$, respectively. \oper{RotateLeft} and \oper{RotateRight} are handled by incrementing and decrementing all entries of $I$, respectively.

After each operation, an additional maintaining step is performed to ensure $A$, $H$, $I$ contain the correct information for the following operations. Assume that we want to maintain the data structure after the $n$-th operation, we record the change in $f$ as $\Delta f=\atT{f}{n}-\atT{f}{n-1}$. Observe that $A(i)$ doesn't change if $b(n, i)=b(n-1, i)$. Let $i^*=\min\{i:b(n, i)=1\}$, then only $A(i)$'s such that $i\leq i^*$ change. Moreover, $A(i)$ becomes empty if $i<i^*$. Similar phenomenon can be seen in the increment of a binary counter. We expect:
\begin{align*}
    A(i^*, 2^{i^*}+c) &= \sum_p H(i^*)(p+\atT{c}{n}+c)g(p) \\
    &= \sum_p (\atT{f}{n}-\atT{f}{n-2^{i^*}})(p+\atT{c}{n}+c)g(p) \\
    &= \sum_p \left(\Delta f + \sum_{j=0}^{i^*-1}\atT{f}{n-2^j}-\atT{f}{n-2^{j+1}}\right)(p+\atT{c}{n}+c)g(p)\\
    &= \sum_p \left(\Delta f + \sum_{j=0}^{i^*-1}H(j)\right)(p+\atT{c}{n}+c)g(p).
\end{align*}

So we can merge $\Delta f$ and $\{H(j):j<i^*\}$ to get $H(j^*)=\atT{f}{n}-\atT{f}{n-2^{i^*}}$, and compute $A(i^*, 2^{i^*}+c)=\sum_p H(i^*)(p+\atT{c}{n}+c)g(p)$ for all $-2^{i^*}\leq c\leq 2^{i^*}$. This is almost a convolution. To see it clearer, let $L=\min\supp{H(i^*)}$ and $R=\max\supp{H(i^*)}$, we can rewrite the expression as $A(i^*, 2^{i^*}+c)=\sum_{p=L}^{R} H(i^*)(p)g(p-\atT{c}{n}-c)$. We now want to shift the origin so that $\supp H$ can start at $0$ to conform to the definition of a convolution. Let $H'(p)=H(i^*)(L+p)$ for $0\leq p\leq R-L$, and $H'(p)=0$ otherwise. Let $g'(p)=g(-p+L-\atT{c}{n}+2^{i^*+1})$, for $0\leq p \leq 3\cdot 2^{i^*}$, and $g'(p)=0$ otherwise. we have:
\begin{align*}
    A(i^*, 2^{i^*}+c) &= \sum_{p=L}^{R} H(i^*)(p)g(p-\atT{c}{n}-c)\\
    &= \sum_{p=0}^{R-L} H(i^*)(L+p)g(p-c+L-\atT{c}{n}) \\
    &= \sum_{p=0}^{R-L} H'(p)g'(2^{i^*+1}+c-p)\\
    &= H'*g'[2^{i^*+1}+c]
\end{align*}
Where $H'*g'$ is the discrete convolution of $H'$ and $g'$.

By \autoref{lemma:compact_support}, $R-L\leq 2^{i^*}$, so $0\leq 2^{i^*+1}+c-p\leq 3\cdot 2^{i^*}$. We can treat both $H'$ and $g'$ as circular vectors of size $3\cdot 2^{i^*}$, and compute $H'*g'[2^{i^*}+c]$ for all $-2^{i^*}\leq c\leq 2^{i^*}$ by the Fast Fourier Transform. We then use the result to construct $A(i^*)$. We set $I(i^*)=0$, and clear all $A(i)$, $H(i)$, and $I(i)$ where $i<i^*$.

Now that we have a way to maintain $A$, $H$, and $I$, the following Lemma shows that the \oper{Query} operation always succeed, i.e., $-2^i\leq I(i)\leq 2^i$.
\begin{lemma}\label{lemma:valid_range_c}
After the $n$-th operation, for any $i_j$ such that $b(n, i_j)=1$, $|\atT{c}{n}-\atT{c}{e(j)}|\leq 2^{i_j}$.
\end{lemma}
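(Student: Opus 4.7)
The plan is to combine the elementary bound from \autoref{lemma:limited_c} with the specific arithmetic of the binary chunk decomposition. Recall that the chunks are indexed by the set bits $i_1<i_2<\dots<i_q$ of $n$, with $d(l)=\sum_{l'=1}^{l}2^{i_{l'}}$, $s(j)=n-d(j)$, and $e(j)=n-d(j-1)$. Thus the number of operations between step $e(j)$ and step $n$ is exactly $n-e(j)=d(j-1)=\sum_{l=1}^{j-1}2^{i_l}$.

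First I would invoke \autoref{lemma:limited_c} with the pair $(e(j),n)$ to obtain $|\atT{c}{n}-\atT{c}{e(j)}|\leq n-e(j)$. Then the key arithmetic step is to bound $n-e(j)=\sum_{l=1}^{j-1}2^{i_l}$ by $2^{i_j}$. Since $i_1<i_2<\dots<i_{j-1}<i_j$ are strictly increasing nonnegative integers, each $i_l\leq i_j-1$ for $l<j$, and in fact $\{i_1,\dots,i_{j-1}\}\subseteq\{0,1,\dots,i_j-1\}$. Hence
\begin{equation*}
    \sum_{l=1}^{j-1}2^{i_l}\;\leq\;\sum_{l=0}^{i_j-1}2^{l}\;=\;2^{i_j}-1\;<\;2^{i_j}.
\end{equation*}
Chaining these two bounds gives $|\atT{c}{n}-\atT{c}{e(j)}|\leq 2^{i_j}$, which is the desired conclusion.

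I don't expect any real obstacle here: the only thing to be careful about is to correctly read off $n-e(j)$ from the definitions of $d,s,e$ (a routine check against the binary-counter picture in \autoref{fig:dynamic_convolution_chunk}) and to remember that the chunks above $i_j$ do not contribute to this difference (since $e(j)$ lies at the right end of chunks indexed by $i_1,\dots,i_{j-1}$ only). Once that accounting is in place, the geometric-series bound finishes the proof in one line.
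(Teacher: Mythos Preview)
Your proof is correct and follows essentially the same line as the paper's: bound $|\atT{c}{n}-\atT{c}{e(j)}|$ by $n-e(j)$ via \autoref{lemma:limited_c}, identify $n-e(j)=d(j-1)=\sum_{l<j}2^{i_l}$, and dominate this by the full geometric series $\sum_{k=0}^{i_j-1}2^k=2^{i_j}-1$. The only cosmetic difference is that you make the subset inclusion $\{i_1,\dots,i_{j-1}\}\subseteq\{0,\dots,i_j-1\}$ explicit, whereas the paper jumps directly to the geometric-series bound.
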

\begin{proof}
$|\atT{c}{n}-\atT{c}{e(j)}|\leq \max_{e(j)\leq k\leq n}\atT{c}{k} - \min_{e(j)\leq k\leq n}\atT{c}{k}\leq n-e(j)=d(j-1)\leq \sum_{k=0}^{i_j-1}2^k = 2^{i_j}$.
\end{proof}

We've proved the correctness of the data structure. We now show the complexity of it.
\begin{theorem}\label{thm:dynamic_convolution}
There is a data structure that supports $n$ operations for Dynamic Convolution with Local Updates and Queries in $O(n\log^2{n})$ time and $O(n)$ space.
\end{theorem}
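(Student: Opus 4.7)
The plan is to verify the complexity bounds; correctness has essentially been established in the preceding discussion. I would split the accounting into the per-operation cost and the post-operation maintenance cost, and analyze them separately.

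For the per-operation cost, a \oper{Query} is a sum over at most $\lceil\log n\rceil+1$ precomputed entries $A(i_j,\,2^{i_j}+I(i_j))$, one per set bit in the operation count, hence $O(\log n)$; \autoref{lemma:valid_range_c} guarantees each index lies in the stored range. \oper{Update}, \oper{IncP}, and \oper{DecP} are $O(1)$, while \oper{RotateLeft} and \oper{RotateRight} adjust the $O(\log n)$ entries of $I$. Summed across $n$ operations this contributes $O(n\log n)$.

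The dominant and most delicate piece is the maintenance. When the $n$-th operation triggers $i^* = \min\{i : b(n,i) = 1\}$, I would form $H(i^*)$ by merging $\Delta f$ with the smaller chunks $H(0),\dots,H(i^*-1)$; by \autoref{corollary:small_support} its support has size at most $2^{i^*}$. The reduction to $H' * g'$ spelled out in the preamble then turns the computation of the $2\cdot 2^{i^*}+1$ entries of $A(i^*,\cdot)$ into a single FFT on circular vectors of length $3\cdot 2^{i^*}$, costing $O(2^{i^*} i^*)$ time. A standard binary-counter amortization shows that the event $i^* = k$ occurs $O(n/2^k)$ times over $n$ operations, so the total maintenance cost is
\begin{equation*}
\sum_{k=0}^{\lceil\log n\rceil} \frac{n}{2^k}\cdot O(2^k\, k) \;=\; O(n\log^2 n),
\end{equation*}
which subsumes the $O(n\log n)$ from the per-operation work.

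For space, at any moment the active chunks have distinct sizes $2^{i_1}<\dots<2^{i_q}$ summing to at most $n$, so by \autoref{lemma:compact_support} we get $\sum_j |\supp H(i_j)| = O(n)$ and by construction $\sum_j |A(i_j)| = O(n)$; the array $I$ adds $O(\log n)$. The one point I expect to require care — the only real obstacle — is ensuring that the FFT scratch space during maintenance is transient: a single maintenance call uses only $O(2^{i^*})$ auxiliary storage, and the smaller $A(i)$, $H(i)$ for $i<i^*$ are released when merged into the new chunk, so the overall working memory remains $O(n)$.
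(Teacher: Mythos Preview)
Your proposal is correct and follows essentially the same argument as the paper: you separate the $O(\log n)$ per-operation work from the FFT-based maintenance, apply the binary-counter amortization to get $\sum_k (n/2^k)\cdot O(2^k k)=O(n\log^2 n)$, and bound space by summing the geometric series of chunk sizes. The only addition is your explicit remark about transient FFT scratch space, which the paper leaves implicit.
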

\begin{proof}
Without considering the additional maintaining step after each operation, it is easy to see a single \oper{IncP}, \oper{DecP}, or \oper{Update} takes $O(1)$ time. Likewise a single \oper{RotateLeft}, \oper{RotateRight}, or \oper{Query} takes $O(\log{n})$ time since the number of bits in the binary representation of $n$ is $O(\log{n})$.

In the maintaining step, whenever the $i$-th bit in the binary representation of $n$ changes from $0$ to $1$, we need to merge changes of total size $O(2^i)$, run FFT on two arrays of size $3\cdot 2^i$, and clear histories of total size $O(2^i)$. These in all takes $O(2^i\log(2^i))$ time where the bottleneck is FFT. Like the analysis in binary counter, $i$-th bit flips from $0$ to $1$ in every $2^i$ operations, so the total time spent on maintaining step is: $\sum_{i=0}^{\log{n}} \frac{n}{2^i}O(2^i\log(2^i)) = O\left(n\sum_{i=0}^{\log{n}}i\log(2)\right)=O(n\log^2{n})$.

Adding two parts, the total time for $n$ operation is $O(n\log^2{n})$.

At any point, the space used by the data structure is linear to the size of $A$, $H$, $I$, even during the maintaining step. So the space complexity is $O\left(\sum_{i=0}^{\log{n}}2^i\right)=O(n)$.
\end{proof}
\begin{corollary}\label{corollary:dynamic_queue}
Let $g:\mathbb{Z}\to\mathbb{R}$ be a fixed function where we can evaluate $g(x)$ in constant time. Let $Q$ be a queue with elements in $\mathbb{R}$. There is a data structure that supports $n$ operations in $O(n\log^2{n})$ time and $O(n)$ space, where each operation is either pushing to the tail of $Q$, popping from the head of $Q$ or querying $\sum_{i=1}^{|Q|}Q(i)g(|Q|-i)$.
\end{corollary}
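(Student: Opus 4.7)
The plan is to reduce the queue problem to two instances of the data structure from \autoref{thm:dynamic_convolution}. Let $T$ denote the total number of pushes performed so far, $H$ the total number of pops, and $v_b$ the value pushed during the $b$-th push, so that $|Q| = T - H$, $Q(i) = v_{H+i}$, and the target quantity can be rewritten as
\begin{equation*}
    \sum_{i=1}^{|Q|} Q(i)\, g(|Q|-i) \;=\; \sum_{b=H+1}^{T} v_b\, g(T-b) \;=\; \sum_{b=1}^{T} v_b\, g(T-b) \;-\; \sum_{b=1}^{H} v_b\, g(T-b).
\end{equation*}
Each of the two sums on the right will be maintained by a separate instance of the dynamic convolution data structure, and a query will return their difference.

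For the first sum, instance $A$ is kept under the invariant $f_A(-b) = v_b$ for $b=1,\ldots,T$ and $c_A = -T$, so that $A.\oper{Query}()$ evaluates to $\sum_k f_A(k + c_A)\, g(k) = \sum_{b=1}^{T} v_b\, g(T-b)$. An original push places its value at position $-(T+1)$, one step lower than the previous push, so it translates to a single $\oper{DecP}$, $\oper{Update}(v)$, and $\oper{RotateRight}$ on $A$; original pops and queries leave $f_A$ and $c_A$ untouched (the latter merely calls $A.\oper{Query}()$). Symmetrically, instance $B$ is kept under the invariant $f_B(-b) = v_b$ for $b=1,\ldots,H$ and $c_B = -T$; an original pop triggers $\oper{DecP}$ and $\oper{Update}(v_{H+1})$ on $B$, while an original push triggers $\oper{RotateRight}$ on $B$ so that $c_B$ continues to equal $-T$.

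Each original queue operation thus translates to $O(1)$ operations on each underlying data structure, and applying \autoref{thm:dynamic_convolution} to $A$ and $B$ then yields the claimed $O(n\log^2 n)$ total time and $O(n)$ total space. The point to verify is that the write pointers $p_A$ and $p_B$ each move monotonically by only one step per push (respectively, per pop). This is precisely why splitting into two instances is needed: any single-instance scheme would force one write pointer to alternate between the two ends of the queue, violating the locality that makes \autoref{thm:dynamic_convolution} efficient.
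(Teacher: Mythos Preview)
Your proof is correct and follows essentially the same approach as the paper: split the queue sum into a ``pushes'' part and a ``pops'' part, maintain each with its own instance of the local dynamic-convolution structure from \autoref{thm:dynamic_convolution}, and combine at query time. The only differences from the paper's version are cosmetic sign and orientation choices (you use \oper{DecP}/\oper{RotateRight} with $g$ directly, whereas the paper uses \oper{IncP}/\oper{RotateLeft} with $g(-x)$; you store positive values in $B$ and subtract, whereas the paper stores negated values in $I_2$ and adds).
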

\begin{proof}
We use two instances of data structure $I_1=(f_1, g_1, p_1, c_1)$ and $I_2=(f_2, g_2, p_2, c_2)$ from \autoref{thm:dynamic_convolution}. We make $g_1(x)=g_2(x)=g(-x)$. Whenever we want to perform a push $x$, we perform \oper{Update($x$)}, \oper{RotateLeft} and \oper{IncP} on $I_1$, and perform \oper{RotateLeft} on $I_2$. Whenever we want to perform a pop $x$, we  perform \oper{Update($-x$)} and \oper{IncP} on $I_2$. Whenever we want to query, we query both $I_1$ and $I_2$ and return the sum. Each operation to the queue expands to constant number of operations on $I_1$ and $I_2$, so the running time is still $O(n\log^2{n})$ and the space is $O(n)$.
\end{proof}

\section{Computing Shapley Value of Mean Width in 3-D}
\subsection{Classification of Cases}
Let $X$ be a convex polyhedron in 3-D, \autoref{lemma:mean_content_formula} becomes:
\begin{equation}
    M_1(X)=\frac{1}{2}\sum_{e\in E(X)}l(e)\psi(e),
\end{equation}
where $E(X)$ is the set of edges of $X$ and $l(e)$ is the length of edge $e$.

Let $N\subset \mathbb{R}^3$ be a point set. Let $n$ be the number of points in the point set. The mean width we are considering is $M_1(\CH(N))$. Given a permutation $\pi$, and a point $p\in N$, for convenience we define:
\begin{equation*}
    C(\pi, p) = \CH(P_N(\pi, p)\cup\{p\}) \quad C'(\pi, p) = \CH(P_N(\pi, p)).
\end{equation*}
We then define the contribution of $p$ under permutation $\pi$ to be:
\begin{equation}
    \Delta(\pi, p) = M_1(C(\pi, p)) - M_1(C'(\pi, p)).
\end{equation}

When there are at least three points in $N$, we have $\psi(e)=1/2-\chi(e)$ for all $e\in E(N)$. In the case where $N$ is an edge $e$, we have $\psi(e)=1$. It is also clear that a single point has width $0$. So for $p\in N$, when considering the Shapley value $\phi(p)$, we can classify the permutations into three cases:
\begin{itemize}
    \item \textsc{Case 1}: There is one point before $p$ in the permutation. i.e., after inserting $p$, the point set forms a line segment.
    \item \textsc{Case 2}: There are two points before $p$ in the permutation. i.e., after inserting $p$, the point set forms a triangle.
    \item \textsc{Case 3}: There are three or more points before $p$ in the permutation.
\end{itemize}

In other words, we can write $\phi(p)$ as:
\begin{equation}
    \phi(p) = \sum_{\pi}\frac{\Delta(\pi, p)}{n!}=\sum_{\pi: \text{Case 1}}\frac{\Delta(\pi, p)}{n!} + \sum_{\pi: \text{Case 2}}\frac{\Delta(\pi, p)}{n!} + \sum_{\pi: \text{Case 3}}\frac{\Delta(\pi, p)}{n!}.
\end{equation}

We use a brute-force approach for Cases 1 and 2. For Case 1, we can rewrite the summation as:
\begin{align*}
    \sum_{\pi: \text{Case 1}}\frac{\Delta(\pi, p)}{n!} &= \sum_{\substack{q\in P\\q\neq p}}\sum_{\pi: \pi = (q, p, \dots)}\frac{\Delta(\pi, p)}{n!}\\
    &= \sum_{\substack{q\in P\\q\neq p}}\sum_{\pi: \pi = (q, p, \dots)}\frac{\|p-q\|/2}{n!}\\
    &= \sum_{\substack{q\in P\\q\neq p}} \frac{\|p-q\|}{2}\Pr(\pi=(q,p,\dots))\\
    &= \sum_{\substack{q\in P\\q\neq p}} \frac{\|p-q\|}{2}\frac{1}{n(n-1)}. \numberthis
\end{align*}

So for any $p$, we can compute the summation in $O(n)$ time by enumerating $q$. And it takes $O(n^2)$ in total to compute for every $p$.

For Case 2, we can rewrite the summation as:
\begin{align*}
    \sum_{\pi: \text{Case 2}}\frac{\Delta(\pi, p)}{n!} &= \sum_{\substack{q, r\in P\\p,q,r\text{ are distinct}}}\sum_{\pi: \pi = (q, r, p, \dots)}\frac{\Delta(\pi, p)}{n!}\\
    &= \sum_{\substack{q, r\in P\\p,q,r\text{ are distinct}}}\sum_{\pi: \pi = (q, r, p, \dots)}\frac{\frac{\|p-q\|+\|r-q\|+\|p-r\|}{4}-\frac{\|r-q\|}{2}}{n!}\\
    &= \sum_{\substack{q, r\in P\\p,q,r\text{ are distinct}}} \frac{\|p-q\|+\|p-r\|-\|r-q\|}{4}\Pr(\pi=(q,r,p,\dots))\\
    &= \sum_{\substack{q, r\in P\\p,q,r\text{ are distinct}}} \frac{\|p-q\|+\|p-r\|-\|r-q\|}{4}\frac{1}{n(n-1)(n-2)}. \numberthis
\end{align*}

Like Case 1, we can compute the summation in $O(n^2)$ time by enumerating $q$ and $r$. And it takes $O(n^3)$ in total to compute for every $p$.

In Case 3, each edge we are considering have two faces attached to it. In other words, each edge can be characterized by the common edge shared by two triangles formed by four points. We can denote an edge by $(q, r, t_1, t_2)$, where $(q, r)$ is the edge and is the common edge of $\bigtriangleup qrt_1$ and $\bigtriangleup qrt_2$. Without loss of generality, we assume unordered tuples when writing $(q, r)$ and $(t_1, t_2)$ to avoid double-counting. The exterior angle is completely determined by the quadruple. In case the convex hull has only 3 points, we allow $t_1=t_2$. We can then apply \autoref{lemma:mean_content_formula} and express $M_1(\CH(N'))$ for a point set $N'$ as:
\begin{align*}
    M_1(\CH(N')) &= \frac{1}{2}\sum_{e\in E(\CH(N'))}l(e)\psi(e) \\
    &= \frac{1}{2}\sum_{\substack{e=(q, r, t_1, t_2)\\q,r, t_1, t_2\in N'\\\bigtriangleup qrt_1, \bigtriangleup qrt_2 \text{ are faces of }\CH(N')}} l(e)\psi(e) \\
    &= \frac{1}{2}\sum_{\substack{e=(q, r, t_1, t_2)\\q,r, t_1, t_2\in N'}}\|q-r\| \psi(e) I_{\CH(N')}(q, r, t_1, t_2)
\end{align*}
where
\begin{equation*}
    I_{X}(q, r, t_1, t_2) = \begin{cases}
                            1 & \quad \bigtriangleup qrt_1, \bigtriangleup qrt_2 \text{ are faces of convex polyhedron }X\\
                            0 & \quad \text{otherwise}
                        \end{cases}
\end{equation*}
is an indicator variable.

Notice that $I_{C(\pi, p)}(q, r, t_1, t_2)=1$ implies $\pi$ being case~3. We then can write the summation as:
\begin{align*}
    \sum_{\pi: \text{Case 3}}\frac{\Delta(\pi, p)}{n!} = & \mathbb{E}_\pi[\Delta(\pi, p)|\pi:\text{Case 3}]\Pr(\pi:\text{Case 3})\\
    = & \frac{1}{2}\mathbb{E}_\pi\left[\sum_{\substack{e=(q, r, t_1, t_2)\\q,r, t_1, t_2\in N}}l(e)\psi(e) I_{C(\pi, p)}(q, r, t_1, t_2) - \right. \\
                & \left.\sum_{\substack{e=(q, r, t_1, t_2)\\q,r, t_1, t_2\in N}}l(e)\psi(e) I_{C'(\pi, p)}(q, r, t_1, t_2)
        \middle| \pi:\text{Case 3}\right]\Pr(\pi:\text{Case 3})\\
    = & \frac{1}{2}\left(\sum_{\substack{e=(q, r, t_1, t_2)\\q,r, t_1, t_2\in N}}l(e)\psi(e) \Pr(I_{C(\pi, p)}(q, r, t_1, t_2)=1\wedge \pi:\text{Case 3}) - \right.\\
    & \left. \sum_{\substack{e=(q, r, t_1, t_2)\\q,r, t_1, t_2\in N}}l(e)\psi(e) \Pr(I_{C'(\pi, p)}(q, r, t_1, t_2)=1\wedge \pi:\text{Case 3}) \right) \\
    = & \frac{1}{2}\sum_{\substack{q, r\in N}}\sum_{\substack{t_1, t_2\in N\\e=(q,r,t_1, t_2)}}l(e)\psi(e)\mathbb{E}_\pi[I_{C(\pi, p)}(q, r, t_1, t_2) - I_{C'(\pi, p)}(q, r, t_1, t_2)] \numberthis\label{eq:case3}
\end{align*}

Now consider the impact when inserting $p$ to $P_N(\pi, p)$. The convex hull doesn't change if $p$ is inside $\CH(P_N(\pi, p))$. Otherwise if we treat $\CH(P_N(\pi, p))$ as an opaque object, all the faces visible to $p$ will be removed in $\CH(P_N(\pi, p)\cup\{p\})$ and a pyramid-like cone with apex at $p$ will be added to $\CH(P_N(\pi, p)\cup\{p\})$. In terms of edges, there are three types of edges: edges removed, edges added and edges with angle changed. See \autoref{fig:edge_changes} for an illustration.
\begin{figure}[t]
    \centering
    \begin{tikzpicture}
    \tikzset{vert/.style={circle, inner sep=1pt, minimum size=1pt, draw=black, fill=black}}
    \tikzset{edge/.style={very thick, solid}}
    \tikzset{front/.style={opacity=0.9}}
    \tikzset{back/.style={opacity=0.3}}
    
    \node[vert, label=above:{$p$}] (p) at (4.16, -1.26) {};
    \node[vert] (v1) at (0.72, -3.19) {};
    \node[vert] (v2) at (3.26, -3.50) {};
    \node[vert] (v3) at (6.36, -3.03) {};
    \node[vert] (v4) at (5.91, -2.56) {};
    \node[vert, back] (v5) at (3.93, -2.43) {};
    \node[vert] (v6) at (0.91, -2.64) {};
    \node[vert] (v7) at (2.53, -2.56) {};
    \node[vert] (v8) at (3.65, -1.81) {};
    
    \draw[edge, front, draw=red] (p) -- (v1)
                           (p) -- (v2)
                           (p) -- (v3)
                           (p) -- (v4)
                           (p) -- (v6);
    \draw[edge, back, draw=red] (p) -- (v5);
    
    \draw[edge, front, draw=orange] (v6) -- (v1) -- (v2) -- (v3) -- (v4);
    \draw[edge, back, draw=orange] (v4) -- (v5) -- (v6);
    
    \draw[edge, back, draw=blue] (v8) -- (v1)
                           (v7) -- (v3)
                           (v8) -- (v3)
                           (v8) -- (v4)
                           (v8) -- (v5)
                           (v8) -- (v6)
                           (v7) -- (v8)
                           (v7) -- (v1)
                           (v7) -- (v2);
    \end{tikzpicture}
    \caption{Change of edges when $p$ is inserted, showing only the part visible to $p$. Red edges: edges added to the convex hull. Blue edges: edges removed from the convex hull. Orange edges: edges with angle changed.}
    \label{fig:edge_changes}
\end{figure}
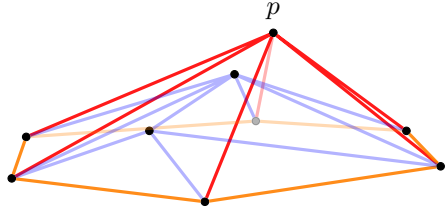

As we are using a $4$-tuple to represent an edge in the summation, edges with angles changed can be seen as removal and addition with different $(t_1, t_2)$. We have:
\begin{equation*}
   I_{C(\pi, p)}(q, r, t_1, t_2) - I_{C'(\pi, p)}(q, r, t_1, t_2) = \left\{ \begin{array}{lr}
                                                                        -1 & \quad (q, r, t_1, t_2)\text{ is an edge of $C'(\pi, p)$ and}\\ & \quad \text{visible to $p$}\\
                                                                        1 & \quad (q, r, t_1, t_2)\text{ is an edge of $C(\pi, p)$ and}\\ & \quad p\in \{t_1, t_2\}\\
                                                                        1 & \quad (q, r, t_1, t_2)\text{ is an edge of $C(\pi, p)$ and}\\ & \quad p\in \{q,r\}\\
                                                                        0 & \quad \text{otherwise}\\
                                                                    \end{array}\right.
\end{equation*}
The first three cases are correspondent to blue+orange, orange and red edges in \autoref{fig:edge_changes} respectively. So we can write:
\begin{equation}\label{eq:decompose_expectation}
\begin{aligned}
    &\mathbb{E}_\pi[I_{C(\pi, p)}(q, r, t_1, t_2) - I_{C'(\pi, p)}(q, r, t_1, t_2)]\\
    =& \Pr((q, r, t_1, t_2)\text{ is an edge of $C(\pi, p)$ and } p\in \{t_1, t_2\})\\
    & + \Pr((q, r, t_1, t_2)\text{ is an edge of $C(\pi, p)$ and } p\in \{q,r\})\\
    & - \Pr((q, r, t_1, t_2)\text{ is an edge of $C'(\pi, p)$ and visible to $p$})
\end{aligned}
\end{equation}
This gives us a way to split the final summation in \autoref{eq:case3} further into 3 summations. We will show how to compute them efficiently in the following subsection.

\subsection{Handling Case 3}
The idea is to enumerate edges $(q, r)$ and compute $\sum_{\substack{t_1, t_2\in N\\e=(q,r,t_1, t_2)}}l(e)\psi(e)\Pr(\cdot)$ for all $p$ where $\Pr(\cdot)$ is one of the three probabilities in \autoref{eq:decompose_expectation}.

\paragraph*{$(q, r, t_1, t_2)$ is an edge of $C'(\pi, p)$ and visible to $p$}
Given a pair of points $(q, r)$, we look at the projection of $N$ along the direction of $qr$.

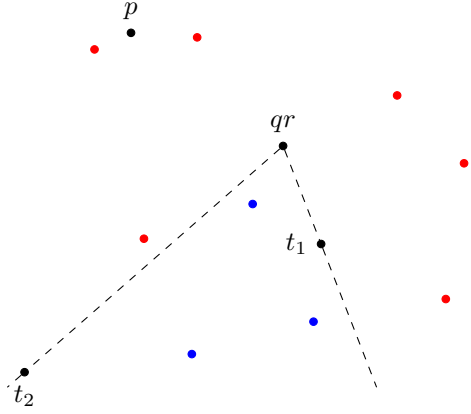
\begin{figure}[t]
    \centering
    \begin{tikzpicture}
    \tikzset{vert/.style={circle, inner sep=1pt, minimum size=1pt}}
    \tikzset{black/.style={draw=black, fill=black}}
    \tikzset{red/.style={draw=red, fill=red}}
    \tikzset{blue/.style={draw=blue, fill=blue}}
    
    \node[vert, black, label=above:{$p$}] (p) at (-2, 1.5) {};
    \node[vert, black, label=above:{$qr$}] (qr) at (0, 0) {};
    \node[vert, black, label=left:{$t_1$}] (t1) at (0.5, -1.3) {};
    \node[vert, black, label=below:{$t_2$}] (t2) at (-3.4, -3) {};
    \node[vert, red] (v1) at (2.38, -0.23) {};
    \node[vert, red] (v2) at (-1.13, 1.44) {};
    \node[vert, red] (v3) at (-1.83, -1.23) {};
    \node[vert, red] (v4) at (-2.48, 1.28) {};
    \node[vert, red] (v5) at (2.14, -2.03) {};
    \node[vert, red] (v6) at (1.5, 0.67) {};
    \node[vert, blue] (v7) at (-0.4, -0.77) {};
    \node[vert, blue] (v8) at (0.4, -2.33) {};
    \node[vert, blue] (v9) at (-1.2, -2.76) {};
    
    \draw[dashed] (qr) -- (t1) -- (1.23, -3.2)
                  (qr) -- (t2) -- (-3.627, -3.2);
    \end{tikzpicture}
    \caption{Projection of $N$ along $qr$. Edge $qr$ gets removed when forming $\CH(P_N(\pi, p)\cup\{p\})$ if and only if (a) $p$ is not in the cone formed by $qr$, $t_1$ and $t_2$, (b) $q$, $r$, $t_1$ and $t_2$ appear before $p$ in $\pi$ and (c) No point outside the cone (red point) appears before $p$ in $\pi$. }
    \label{fig:projection_removal}
\end{figure}
\begin{lemma}\label{lemma:removal_criteria}
$(q, r, t_1, t_2)$ is an edge of $C'(\pi, p)$ and visible to $p$ if and only if
\begin{enumerate}[(a)]
    \item $p$ is not in the cone formed by $qr$, $t_1$ and $t_2$. \label{item:visibility_criteria}
    \item $q$, $r$, $t_1$ and $t_2$ appear before $p$ in $\pi$. \label{item:exist_criteria}
    \item No point outside the cone appears before $p$ in $\pi$. \label{item:support_criteria}
\end{enumerate}
\end{lemma}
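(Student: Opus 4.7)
The plan is to prove the lemma by projecting the entire point set along the direction of the line through $q$ and $r$, reducing the 3D condition to a 2D condition on the projected hull. Under this projection, $q$ and $r$ collapse to a single point $\overline{qr}$, and every other point $x \in N$ maps to a distinct projected point $\bar{x}$ (by general position). The key correspondence I will establish first is a standard fact: for a set $S \supseteq \{q, r\}$, the segment $qr$ is an edge of $\CH(S)$ if and only if $\overline{qr}$ lies on the boundary of $\CH(\bar{S})$; moreover, when this holds, the two triangular faces of $\CH(S)$ incident to edge $qr$ are exactly $qrt_1$ and $qrt_2$ iff $\bar{t_1}$ and $\bar{t_2}$ are the two neighbors of $\overline{qr}$ on the boundary of $\CH(\bar{S})$. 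Setting $S = P_N(\pi, p)$, this reduces the combinatorial structure of the 4-tuple being an edge of $C'(\pi, p)$ to the statement that $\overline{qr}$ is a vertex of $\CH(\overline{S})$ with $\bar{t_1}, \bar{t_2}$ as its two boundary-adjacent vertices, which in turn is equivalent to: (i) $q, r, t_1, t_2 \in S$ and (ii) every other point $\bar{x}$ of $\bar{S}$ lies inside the 2D cone with apex $\overline{qr}$ through $\bar{t_1}$ and $\bar{t_2}$.

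Next I will translate the visibility condition. A triangular face $qrt_i$ of $C'(\pi, p)$ is visible from $p$ precisely when $p$ lies strictly on the opposite side of the supporting plane of $qrt_i$ from $\CH(S)$. Since that supporting plane contains the line $qr$, its trace in the projection is the line through $\overline{qr}$ and $\bar{t_i}$; the side relation is preserved under projection along $qr$, so face $qrt_i$ is visible from $p$ iff $\bar{p}$ lies on the opposite side of this line from the interior of $\CH(\bar{S})$ (which is the side of $\bar{t}_{3-i}$). The edge $qr$ is removed when $p$ is inserted iff both adjacent faces are visible, which, combining both side conditions, is equivalent to $\bar{p}$ lying outside the cone through $\bar{t_1}, \bar{t_2}$.

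With the two translations in hand, the proof is immediate in both directions. For the forward direction, assume $(q,r,t_1,t_2)$ is an edge of $C'(\pi, p)$ visible to $p$: (b) follows since $q, r, t_1, t_2$ are vertices of $C'(\pi, p) = \CH(P_N(\pi, p))$; (c) follows from the ``every other projected point lies in the cone'' characterization applied to $S = P_N(\pi, p)$; and (a) follows from the visibility translation. For the reverse direction, (b) and (c) together give that $\overline{qr}$ is a vertex of $\CH(\overline{P_N(\pi, p)})$ with exactly $\bar{t_1}, \bar{t_2}$ as neighbors on the hull, so $(q, r, t_1, t_2)$ is an edge of $C'(\pi, p)$; (a) then gives visibility from $p$.

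The main obstacle is purely bookkeeping: carefully justifying that projection along $qr$ preserves the relevant side and containment relations, and that under the general position assumption no projected points collide with $\overline{qr}$, $\bar{t_1}$, or $\bar{t_2}$ or lie on the boundary rays of the cone, so the ``inside/outside the cone'' dichotomy is unambiguous. Once that is cleanly stated, both directions of the equivalence become essentially by-definition unpackings of the two translations above.
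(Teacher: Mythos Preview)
Your overall plan---project along $qr$ and reduce the 3-D edge/face structure at $qr$ to the 2-D vertex structure at the projected point $\overline{qr}$---is exactly the approach the paper takes, and your treatment of conditions (b) and (c) is correct.

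The visibility translation, however, has a real error. In the lemma, ``$(q,r,t_1,t_2)$ visible to $p$'' means the \emph{4-tuple} disappears from $C(\pi,p)$, i.e.\ $I_{C(\pi,p)}(q,r,t_1,t_2)=0$ while $I_{C'(\pi,p)}(q,r,t_1,t_2)=1$. This happens whenever \emph{at least one} of the two incident faces $\triangle qrt_1$, $\triangle qrt_2$ is visible from $p$: if only one is visible, the segment $qr$ survives as an edge but with a different adjacent face, so the 4-tuple still changes (these are exactly the ``orange'' edges in the paper's Figure~2). Your argument instead equates visibility with ``the edge $qr$ is removed,'' which requires \emph{both} faces to be visible.

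This matters because your final geometric claim is then wrong. ``Both side conditions'' hold precisely when $\bar p$ lies in the vertical angle opposite the cone (the intersection of the two open half-planes), which is a strict subset of the cone's complement whenever the cone angle is less than $\pi$. The correct statement is
\[
\bar p \text{ outside the cone} \iff \text{at least one side condition holds} \iff \text{at least one face visible} \iff \text{4-tuple removed}.
\]
So the fix is simply to replace ``both adjacent faces are visible'' with ``at least one adjacent face is visible,'' and correspondingly to take the union (not the intersection) of the two half-plane conditions; then your equivalence with (a) goes through. The paper's own proof phrases the same thing as ``$pq$ and $pr$ are completely outside $C'(\pi,p)$,'' which in the projection becomes ``the segment from $\bar p$ to $\overline{qr}$ avoids the projected hull,'' and that is equivalent to $\bar p$ lying outside the cone.
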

\begin{proof}
(\ref{item:exist_criteria}) is immediate as we need $q$, $r$, $t_1$ and $t_2$ to be in $P_N(\pi, p)$ so that the edge can be in $C'(\pi, p)$. Given (\ref{item:exist_criteria}), $(q, r, t_1, t_2)$ is an edge of $C'(\pi, p)$ if and only if $\bigtriangleup qrt_1$ and $\bigtriangleup qrt_2$ form two supporting planes of $P_N(\pi, p)$ if and only if no point outside the cone appears before $p$ in $\pi$. Finally, $(q, r, t_1, t_2)$ is visible to $p$ if and only if $pq$ and $pr$ are completely outside $C'(\pi, p)$ if and only if $p$ is not in the cone formed by $qr$, $t_1$ and $t_2$. See \autoref{fig:projection_removal} for an example.
\end{proof}
Treat $qr$ as the origin and let $p_1, p_2, \dots, p_{n-2}$ be the rest of the points in $N$ sorted by polar angles relative to $qr$. In other words, $p_1, p_2, \dots, p_{n-2}$ is the order of points when we sweep a ray starting from $qr$ around counterclockwise, initially to the direction of positive $x$-axis. Let $\theta_i$ be the polar angle of $p_i$. For convenience, we treat the sequence $p_1, p_2, \dots, p_{n-2}$ as a cyclic array, in the sense that $p_{n-1}=p_1$. Moreover, when we iterate through the sequence, $\theta_i$ is non-decreasing. In other words, when we iterate $p_1, p_2, \dots, p_{n-1}, p_n, \dots$, although $p_{n-1}$ and $p_1$ are the same point, we treat $\theta_{n-1}=\theta_1+2\pi$.

Let $W_{qr}(p_i, p_j)$ be the number of points in the cone formed by $qr$, $p_i$ and $p_j$. For a point $p$ outside the cone, \autoref{lemma:set_probability_formula} gives us: 
\begin{equation}\label{eq:probability_removal}
    \Pr((q, r, p_i, p_j)\text{ is an edge of $C'(\pi, p)$ and visible to $p$}) = \frac{4!(n-5-W_{qr}(p_i, p_j))!}{(n-W_{qr}(p_i, p_j))!}
\end{equation}
For simplicity, let
\begin{equation}\label{eq:g}
    g(i)=\frac{4!(n-5-i)!}{(n-i)!}
\end{equation}

Let $S(i)$ be the set of pairs $(p_j, p_k)$ such that $p_i$ is not in the cone formed by $qr$, $p_j$ and $p_k$. We assume $j\leq k$ and the cone is formed by sweeping from $p_j$ to $p_k$ counterclockwise. For a given point $p_i$, \autoref{lemma:removal_criteria} and \autoref{eq:probability_removal} gives:
\begin{align*}
    &\sum_{\substack{t_1, t_2\in N\\e=(q,r,t_1, t_2)}}l(e)\psi(e)\Pr((q, r, t_1, t_2)\text{ is an edge of $C'(\pi, p)$ and visible to $p_i$})\\
    =&\sum_{(p_j, p_k)\in S(i)}\|q-r\|\left(\frac{1}{2}-\frac{\theta_k-\theta_j}{2\pi}\right)g(W_{qr}(p_j, p_k))\\
    =& \|q-r\|\left(\sum_{(p_j, p_k)\in S(i)}\frac{1}{2}g(W_{qr}(p_j, p_k)) -\sum_{(p_j, p_k)\in S(i)} \frac{\theta_k-\theta_j}{2\pi}g(W_{qr}(p_j, p_k))\right). \numberthis\label{eq:split_to_si}
\end{align*}

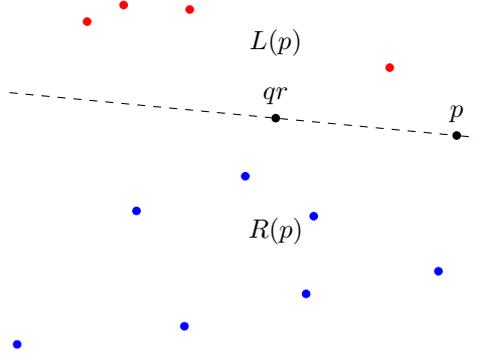
\begin{figure}[t]
    \centering
    \begin{tikzpicture}
    \tikzset{vert/.style={circle, inner sep=1pt, minimum size=1pt}}
    \tikzset{black/.style={draw=black, fill=black}}
    \tikzset{red/.style={draw=red, fill=red}}
    \tikzset{blue/.style={draw=blue, fill=blue}}
    \tikzset{purple/.style={draw=purple, fill=purple}}
    
    \node[vert, red] (v1) at (-2, 1.5) {};
    \node[vert, black, label=above:{$qr$}] (qr) at (0, 0) {};
    \node[vert, blue] (t1) at (0.5, -1.3) {};
    \node[vert, blue] (t2) at (-3.4, -3) {};
    \node[vert, black, label=above:{$p$}] (p) at (2.38, -0.23) {};
    \node[vert, red] (v2) at (-1.13, 1.44) {};
    \node[vert, blue] (v3) at (-1.83, -1.23) {};
    \node[vert, red] (v4) at (-2.48, 1.28) {};
    \node[vert, blue] (v5) at (2.14, -2.03) {};
    \node[vert, red] (v6) at (1.5, 0.67) {};
    \node[vert, blue] (v7) at (-0.4, -0.77) {};
    \node[vert, blue] (v8) at (0.4, -2.33) {};
    \node[vert, blue] (v9) at (-1.2, -2.76) {};
    
    \node at (0, 1) {$L(p)$};
    \node at (0, -1.5) {$R(p)$};
    
    \draw[dashed] (-3.5, 0.338) --  (qr) -- (p) -- (2.7, -0.261);
    \end{tikzpicture}
    \caption{Partition of points based on whether a point is left to $p-qr$ or right to $p-qr$. Red points are left, and blue points are right.}
    \label{fig:left_right_subset}
\end{figure}
We now show how to compute $\sum_{(p_j, p_k)\in S(i)} \frac{\theta_k-\theta_j}{2\pi}g(W_{qr}(p_j, p_k))$ for all $i$. For each point $p$, we partition the point set by whether a point is left to $p-qr$ or right to $p-qr$. More formally:%
\begin{equation}
    L(p)=\{p'\in N: (p-qr) \times (p'-qr)>0\} \quad R(p)=\{p'\in N: (p-qr) \times (p'-qr)<0\}
\end{equation}
as shown in \autoref{fig:left_right_subset}.

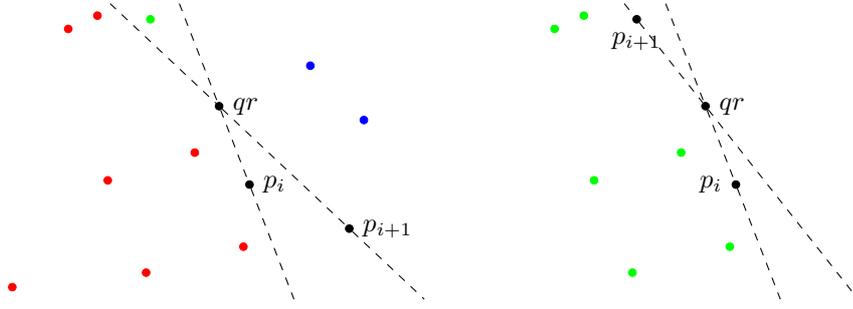
\begin{figure}[t]
    \begin{subfigure}[t]{0.45\textwidth}
    \centering
    \begin{tikzpicture}[scale=0.8]
    \tikzset{vert/.style={circle, inner sep=1pt, minimum size=1pt}}
    \tikzset{black/.style={draw=black, fill=black}}
    \tikzset{red/.style={draw=red, fill=red}}
    \tikzset{blue/.style={draw=blue, fill=blue}}
    \tikzset{green/.style={draw=green, fill=green}}
    
    \node[vert, red] (v1) at (-2, 1.5) {};
    \node[vert, black, label=right:{$qr$}] (qr) at (0, 0) {};
    \node[vert, black, label=right:{$p_i$}] (p) at (0.5, -1.3) {};
    \node[vert, red] (t2) at (-3.4, -3) {};
    \node[vert, blue] (v5) at (2.38, -0.23) {};
    \node[vert, green] (v2) at (-1.13, 1.44) {};
    \node[vert, red] (v3) at (-1.83, -1.23) {};
    \node[vert, red] (v4) at (-2.48, 1.28) {};
    \node[vert, black, label=right:{$p_{i+1}$}] (np) at (2.14, -2.03) {};
    \node[vert, blue] (v6) at (1.5, 0.67) {};
    \node[vert, red] (v7) at (-0.4, -0.77) {};
    \node[vert, red] (v8) at (0.4, -2.33) {};
    \node[vert, red] (v9) at (-1.2, -2.76) {};
    
    \draw[dashed] (-0.654, 1.7) -- (qr) -- (p) -- (1.231, -3.2)
                  (-1.792, 1.7) -- (qr) -- (np) -- (3.373, -3.2);
    \end{tikzpicture}
    
    \end{subfigure}
    \begin{subfigure}[t]{0.45\textwidth}
    \centering
    \begin{tikzpicture}[scale=0.8]
    \tikzset{vert/.style={circle, inner sep=1pt, minimum size=1pt}}
    \tikzset{black/.style={draw=black, fill=black}}
    \tikzset{red/.style={draw=red, fill=red}}
    \tikzset{blue/.style={draw=blue, fill=blue}}
    \tikzset{green/.style={draw=green, fill=green}}
    
    \node[vert, green] (v1) at (-2, 1.5) {};
    \node[vert, black, label=right:{$qr$}] (qr) at (0, 0) {};
    \node[vert, black, label=left:{$p_i$}] (p) at (0.5, -1.3) {};
    \node[vert, black, label=below:{$p_{i+1}$}] (np) at (-1.13, 1.44) {};
    \node[vert, green] (v3) at (-1.83, -1.23) {};
    \node[vert, green] (v4) at (-2.48, 1.28) {};
    \node[vert, green] (v7) at (-0.4, -0.77) {};
    \node[vert, green] (v8) at (0.4, -2.33) {};
    \node[vert, green] (v9) at (-1.2, -2.76) {};
    
    \draw[dashed] (-0.654, 1.7) -- (qr) -- (p) -- (1.231, -3.2)
                  (-1.334, 1.7) -- (np) -- (qr) -- (2.51, -3.2);
    \end{tikzpicture}
    \end{subfigure}
    \caption{When changing from $S(i)$ to $S(i+1)$, pairs formed between $p_{i+1}$ and $L(p_{i+1})\cup\{p_{i+1}\}$ are removed and pairs formed between $p_i$ and $R(p_i)\cup\{p_i\}$ are added. Left: $p_{i+1}\in L(p_i)$. Right: $p_{i+1}\in R(p_i)$.  }
    \label{fig:change_si}
\end{figure}

Consider the difference between $S(i)$ and $S(i)$. It is easy to see:
\begin{align*}
    \sum_{(p_j, p_k)\in S(i+1)} \frac{\theta_k-\theta_j}{2\pi}g(W_{qr}(p_j, p_k)) = & \sum_{(p_j, p_k)\in S(i)} \frac{\theta_k-\theta_j}{2\pi}g(W_{qr}(p_j, p_k))\\
    + &\sum_{p_j\in R(p_i)\cup\{p_i\}}\frac{\theta_i-\theta_j}{2\pi}g(W_{qr}(p_i, p_j)) \\
    - &\sum_{p_j\in L(p_{i+1})\cup\{p_{i+1}\}}\frac{\theta_j-\theta_{i+1}}{2\pi}g(W_{qr}(p_j, p_{i+1}))\numberthis\label{eq:delta_si}
\end{align*}
as demonstrated in \autoref{fig:change_si}.

We can further write:
\begin{align*}
    &\sum_{p_j\in R(p_i)\cup\{p_i\}}\frac{\theta_i-\theta_j}{2\pi}g(W_{qr}(p_i, p_j)) \\
    =& \theta_i\sum_{p_j\in R(p_i)\cup\{p_i\}}\frac{1}{2\pi}g(W_{qr}(p_i, p_j))-\sum_{p_j\in R(p_i)\cup\{p_i\}}\frac{\theta_j}{2\pi}g(W_{qr}(p_i, p_j)) \numberthis\label{eq:combination_eq}
\end{align*}
We will show how to compute:
\begin{equation*}
    \sum_{p_j\in R(p_i)\cup\{p_i\}}f(j)g(W_{qr}(p_i, p_j))
\end{equation*}
for an arbitrary function $f$ and for all $i$ at the same time. \autoref{eq:combination_eq} can then be computed by using $f(j)=1/2\pi$ and $f(j)=\theta_j/2\pi$, respectively.

If we sort the points in $R(p_i)\cup\{p_i\}$ by polar angles as $q_1, q_2, \dots, q_l$, where $l=|R(p_i)\cup\{p_i\}|$, it is easy to see $W_{qr}(p_i, q_j)=l-j$ for $1\leq j\leq l$. In other words, the number of points within the cone formed by $qr$, $p_i$, and $q_j$ is equal to the distance between $q_j$ and $p_i$ in the sequence $p_1$, $p_2$, \dots.

Now we consider an instance of the data structure from \autoref{corollary:dynamic_queue}. We use our $g$ from \autoref{eq:g} as the function used by the data structure. We start by choosing an arbitrary point $p_i$ and consider a ray opposite to $p_i-qr$. We sweep this ray counterclockwise until hitting $p_i$, for each point $p_j$ hit by the ray, we perform a push $f(j)$ to the data structure. After we hit $p_i$, we perform a query and the result is exactly $\sum_{p_j\in R(p_i)\cup\{p_i\}}f(j)g(W_{qr}(p_i, p_j))$. Next we sweep the ray to $p_{i+1}$ and pop all the points that are left to $p_{i+1}$. They should all appear at the head of the data structure. We then perform a query and the result is exactly $\sum_{p_j\in R(p_{i+1})\cup\{p_{i+1}\}}f(j)g(W_{qr}(p_i, p_j))$. We keep sweeping, popping and querying until coming back to $p_{i+n-3}$ which is $p_{i-1}$. During this process, each point gets pushed and popped at most twice and we perform $n-2$ queries. So the running time is $O(n\log^2{n})$ and the space complexity is $O(n)$ according to \autoref{corollary:dynamic_queue}.

Hence we can compute $\sum_{p_j\in R(p_i)\cup\{p_i\}}\frac{\theta_i-\theta_j}{2\pi}g(W_{qr}(p_i, p_j))$ for all $i$ in $O(n\log^2{n})$ time and $O(n)$ space. With the same idea but sweeping the other way around, we can also compute $\sum_{p_j\in L(p_{i+1})\cup\{p_{i+1}\}}\frac{\theta_j-\theta_{i+1}}{2\pi}g(W_{qr}(p_j, p_{i+1}))$ for all $i$ in the same time and space complexity.

Next we only need to compute $\sum_{(p_j, p_k)\in S(i)} \frac{\theta_k-\theta_j}{2\pi}g(W_{qr}(p_j, p_k))$ for a single $i$ and then used the pre-computed result to update to next $\sum_{(p_j, p_k)\in S(i+1)} \frac{\theta_k-\theta_j}{2\pi}g(W_{qr}(p_j, p_k))$ in constant time. Like in the previous case, we need to compute summations of the form $\sum_{(p_j, p_k)\in S(i)} f(j)g(W_{qr}(p_j, p_k))$. We again use an instance of data structure from \autoref{corollary:dynamic_queue} and use the same $g$ as above. We start by pushing $p_{i+1}$ and sweep counterclockwise. For each point hit, we first pop all the points that are left to the point, push the point to the data structure and finally perform a query. We stop after hitting $p_{i+n-3}$ which is $p_{i-1}$. The sum of all the queries will then be $\sum_{(p_j, p_k)\in S(i)} f(j)g(W_{qr}(p_j, p_k))$. In this procedure, each point is pushed and popped at most once, and $n-3$ queries are made. So it takes $O(n\log^2{n})$ time and $O(n)$ space to compute for a single $i$. After that it takes $O(n)$ time to compute for all $i$ by transitioning from $i$ to $i+1$ in constant time.

Using the same idea, we can compute the other part of \autoref{eq:split_to_si} in $O(n\log^2{n})$ time and $O(n)$ space as well.

For a fixed pair $(q, r)$, it takes $O(n\log{n})$ to sort other points by polar angles on the projected plane. And it takes $O(n)$ to pre-compute $g$. Hence, it takes $O(n\log^2{n})$ time and $O(n)$ space to compute $\sum_{\substack{t_1, t_2\in N\\e=(q,r,t_1, t_2)}}l(e)\psi(e)\Pr(\cdot)$ for all $p$ for the case where $(q, r, t_1, t_2)$ is edge of $C'(\pi, p)$ and visible to $p$. And it takes $O(n^3\log^2{n})$ time and $O(n)$ space overall by enumerating all pairs of $(q, r)$.

\paragraph*{$(q, r, t_1, t_2)$ is an edge of $C(\pi, p)$ and $p\in \{t_1, t_2\}$}
Again we fix a pair $(q, r)$, and look at the projection of $N$ along the direction of $qr$. Without loss of generality, assume $p=t_1$ in this case.
\begin{lemma}\label{lemma:addition_criterion_not_on_edge}
$(q, r, p, t_2)$ is edge of $C(\pi, p)$ if and only if
\begin{enumerate}[(a)]
    \item $q$, $r$ and $t_2$ appear before $p$ in $\pi$.
    \item No point outside the cone formed by $qr$, $p$ and $t_2$ appears before $p$ in $\pi$.
\end{enumerate}
\end{lemma}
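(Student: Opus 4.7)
The plan is to argue by direct analogy with \autoref{lemma:removal_criteria}, while pinpointing the one structural difference: here $p$ is a vertex of the candidate edge's adjacent triangle rather than an external witness of visibility. As before, I would set up the projection of $N$ along the direction of $qr$, so that the cone formed by $qr$, $p$ and $t_2$ becomes a planar wedge with apex at $qr$, and the question ``is $\triangle qrp$ a face of $C(\pi,p)$?'' reduces to a half-plane test in the projection.

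First I would handle (a): since $(q,r,p,t_2)$ being an edge of $C(\pi,p)$ requires $q,r,t_2\in P_N(\pi,p)\cup\{p\}\setminus\{p\}=P_N(\pi,p)$, the points $q$, $r$, $t_2$ must precede $p$ in $\pi$. This is immediate from the definition of $P_N(\pi,p)$. Next, assuming (a), I would reduce the edge condition to a pair of supporting-plane conditions: $(q,r,p,t_2)$ is an edge of $C(\pi,p)$ iff both $\triangle qrp$ and $\triangle qrt_2$ are faces of $C(\pi,p)$, which holds iff both triangles lie on the boundary of $\CH(P_N(\pi,p)\cup\{p\})$.

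The core step is the equivalence between ``$\triangle qrp$ and $\triangle qrt_2$ are supporting'' and condition (b). In the projection along $qr$, $\triangle qrp$ and $\triangle qrt_2$ become two rays from the projected point of $qr$ through $p$ and $t_2$ respectively, delimiting the wedge. Both triangles are supporting planes of $C(\pi,p)$ iff every point of $P_N(\pi,p)\cup\{p\}$ projects into the closed wedge. The apex points $q,r$ and the boundary points $p,t_2$ are automatically in the closed wedge, so the test reduces to the remaining points of $P_N(\pi,p)\setminus\{q,r,t_2\}$, i.e.\ to points appearing before $p$ in $\pi$ and distinct from $q,r,t_2$. This is precisely condition (b): no point outside the cone appears before $p$ in $\pi$.

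The only genuine obstacle, compared with \autoref{lemma:removal_criteria}, is noticing that no visibility-style condition analogous to item \ref{item:visibility_criteria} of that lemma is needed here, because $p$ itself is one of the vertices defining the cone rather than an extraneous observer. I would conclude by noting that the two implications combine to give the ``if and only if'', and I would point to a figure (analogous to \autoref{fig:projection_removal}) showing the projected configuration, so that the reader can verify the wedge argument visually.
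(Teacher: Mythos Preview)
Your proposal is correct and takes essentially the same approach as the paper: the paper's proof consists only of the sentence ``The proof will be almost the same as the proof for \autoref{lemma:removal_criteria}'' together with a pointer to \autoref{fig:projection_addition_not_on_edge}, and your write-up is precisely that analogy spelled out, including the observation that no visibility condition is needed because $p$ lies on the boundary of the cone.
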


The proof will be almost the same as the proof for \autoref{lemma:removal_criteria}. See \autoref{fig:projection_addition_not_on_edge} for an illustration.

\begin{figure}[t]
    \centering
    \begin{tikzpicture}
    \tikzset{vert/.style={circle, inner sep=1pt, minimum size=1pt}}
    \tikzset{black/.style={draw=black, fill=black}}
    \tikzset{red/.style={draw=red, fill=red}}
    \tikzset{blue/.style={draw=blue, fill=blue}}
    
    \node[vert, black, label=above:{$p$}] (p) at (-2, 1.5) {};
    \node[vert, black, label=above:{$qr$}] (qr) at (0, 0) {};
    \node[vert, red] (t1) at (0.5, -1.3) {};
    \node[vert, black, label=below:{$t_2$}] (t2) at (-3.4, -3) {};
    \node[vert, red] (v1) at (2.38, -0.23) {};
    \node[vert, red] (v2) at (-1.13, 1.44) {};
    \node[vert, blue] (v3) at (-1.83, -1.23) {};
    \node[vert, blue] (v4) at (-2.48, 1.28) {};
    \node[vert, red] (v5) at (2.14, -2.03) {};
    \node[vert, red] (v6) at (1.5, 0.67) {};
    \node[vert, red] (v7) at (-0.4, -0.77) {};
    \node[vert, red] (v8) at (0.4, -2.33) {};
    \node[vert, red] (v9) at (-1.2, -2.76) {};
    
    \draw[dashed] (qr) -- (p) -- (-2.267, 1.7)
                  (qr) -- (t2) -- (-3.627, -3.2);
    \end{tikzpicture}
    \caption{Projection of $N$ along $qr$. Edge $qr$ gets added when forming $\CH(P_N(\pi, p)\cup\{p\})$ if and only if (a) $q$, $r$ and $t_2$ appear before $p$ in $\pi$ (b) No point outside the cone formed by $qr$, $p$ and $t_2$ (red point) appears before $p$ in $\pi$. }
    \label{fig:projection_addition_not_on_edge}
\end{figure}
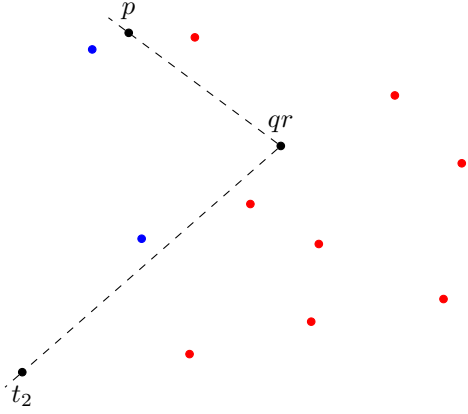

In this case
\begin{equation}
    \Pr((q, r, p, t_2)\text{ is an edge of $C(\pi, p)$}) = \frac{3!(n-4-W_{qr}(p, t_2))!}{(n-W_qr(p,t_2))!}
\end{equation}
We have a slightly different $g(i)=\frac{3!(n-4-i)!}{(n-i)!}$. Any point except $p$ can form a cone with $p$. So we can write
\begin{align*}
    &\sum_{\substack{t_2\in N\\e=(q,r,p_i, t_2)}}l(e)\psi(e)\Pr((q, r, p, t_2)\text{ is an edge of $C(\pi, p)$ })\\
    =&\sum_{p_j\in R(p_i)}\|q-r\|\left(\frac{1}{2}-\frac{\theta_i-\theta_j}{2\pi}\right)g(W_{qr}(p_i, p_j))\\
    +& \sum_{p_j\in L(p_i)}\|q-r\|\left(\frac{1}{2}-\frac{\theta_j-\theta_i}{2\pi}\right)g(W_{qr}(p_j, p_i))\numberthis\label{eq:addition_split}
\end{align*}

In the previous case, we've shown how to compute summations with very similar form as summations in \autoref{eq:addition_split}. The only differences are that we take $p_j\in R(p_i)$ instead of $p_j\in R(p_i)\cup\{p_i\}$ and we have a slightly different $g$ here. But clearly it take constant time to calculate the difference if we use the approach before here. Hence it takes $O(n^3\log^2{n})$ time in total and $O(n)$ space by using the same method.

\paragraph*{$(q, r, t_1, t_2)$ is an edge of $C(\pi, p)$ and $p\in \{q,r\}$}
Without loss of generality, assume $p=q$. We look at the projection of $N$ along the direction of $pr$.
\begin{lemma}\label{lemma:addition_criterion_on_edge}
$(p, r, t_1, t_2)$ is edge of $C(\pi, p)$ if and only if
\begin{enumerate}[(a)]
    \item $r$, $t_1$ and $t_2$ appear before $p$ in $\pi$.
    \item No point outside the cone formed by $pr$, $t_1$ and $t_2$ appears before $p$ in $\pi$.
\end{enumerate}
\end{lemma}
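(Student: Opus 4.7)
The plan is to mirror the proof of \autoref{lemma:removal_criteria} (and the parallel remark following \autoref{lemma:addition_criterion_not_on_edge}), specializing to the case where $p$ is now a vertex of the edge itself rather than an external point. First I would observe that since $p\in P_N(\pi, p)\cup\{p\}$ automatically, the requirement that all four of $p, r, t_1, t_2$ be vertices of $C(\pi, p)$ collapses to asking that $r$, $t_1$, $t_2$ lie in $P_N(\pi, p)$, which is exactly condition~(a). So (a) is clearly necessary, and I would dispatch this half in a single sentence.

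For the main equivalence, assume (a) and project $P_N(\pi, p)\cup\{p\}$ along the line through $p$ and $r$. Both $p$ and $r$ collapse to a single point (the origin), while the candidate faces $\bigtriangleup prt_1$ and $\bigtriangleup prt_2$ become two rays from the origin through the projected $t_1$, $t_2$, which together bound the planar wedge that is the projection of the cone in condition~(b). The pair $\{\bigtriangleup prt_1, \bigtriangleup prt_2\}$ forms two adjacent faces of $C(\pi, p)$ sharing the edge $pr$ if and only if both triangles are supporting planes of $P_N(\pi, p)\cup\{p\}$, which in the projection is equivalent to every projected point of $P_N(\pi, p)$ lying inside the closed wedge. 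This is precisely condition (b), and handles the ``if'' and ``only if'' directions simultaneously.

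The main subtlety — essentially the only place this proof departs from \autoref{lemma:removal_criteria} — is the absence of a visibility clause. In the removal case one had to impose that $p$ lies outside the cone so that the edge is actually lost upon inserting $p$; here, because $p$ is itself an endpoint of the edge, the edge $pr$ is trivially ``new'' relative to $C'(\pi, p)$ and no such extra condition is needed. I would flag this single observation as the substantive content of the proof, since everything else is a direct translation of the earlier projection-and-supporting-plane argument. A brief pointer to a figure analogous to \autoref{fig:projection_addition_not_on_edge}, with $p$ placed at the apex of the cone rather than outside it, would make the picture transparent.
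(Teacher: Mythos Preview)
Your proposal is correct and follows essentially the same approach as the paper, which simply states that the proof is almost the same as that of \autoref{lemma:removal_criteria} and refers to \autoref{fig:projection_addition_on_edge}. Your write-up is in fact more detailed than the paper's own, correctly isolating both the projection-and-supporting-plane argument and the one substantive difference, namely that no visibility clause is needed because $p$ is an endpoint of the edge.
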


The proof will be almost the same as the proof for \autoref{lemma:removal_criteria}. See \autoref{fig:projection_addition_on_edge} for an illustration.

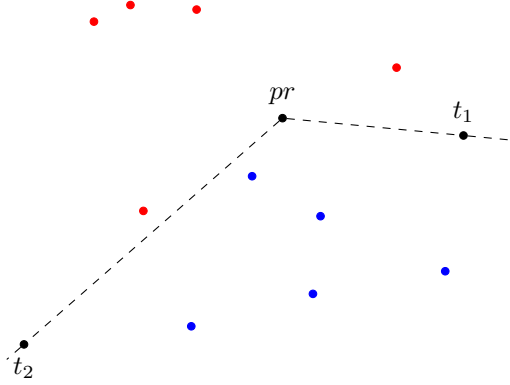
\begin{figure}[t]
    \centering
    \begin{tikzpicture}
    \tikzset{vert/.style={circle, inner sep=1pt, minimum size=1pt}}
    \tikzset{black/.style={draw=black, fill=black}}
    \tikzset{red/.style={draw=red, fill=red}}
    \tikzset{blue/.style={draw=blue, fill=blue}}
    
    \node[vert, red] (p) at (-2, 1.5) {};
    \node[vert, black, label=above:{$pr$}] (pr) at (0, 0) {};
    \node[vert, blue] (v1) at (0.5, -1.3) {};
    \node[vert, black, label=below:{$t_2$}] (t2) at (-3.4, -3) {};
    \node[vert, black, label=above:{$t_1$}] (t1) at (2.38, -0.23) {};
    \node[vert, red] (v2) at (-1.13, 1.44) {};
    \node[vert, red] (v3) at (-1.83, -1.23) {};
    \node[vert, red] (v4) at (-2.48, 1.28) {};
    \node[vert, blue] (v5) at (2.14, -2.03) {};
    \node[vert, red] (v6) at (1.5, 0.67) {};
    \node[vert, blue] (v7) at (-0.4, -0.77) {};
    \node[vert, blue] (v8) at (0.4, -2.33) {};
    \node[vert, blue] (v9) at (-1.2, -2.76) {};
    
    \draw[dashed] (pr) -- (t1) -- (3, -0.29)
                  (pr)-- (t2) -- (-3.627, -3.2);
    \end{tikzpicture}
    \caption{Projection of $N$ along $pr$. Edge $pr$ gets added when forming $\CH(P_N(\pi, p)\cup\{p\})$ if and only if (a) $r$, $t_1$ and $t_2$ appear before $p$ in $\pi$ (b) No point outside the cone formed by $pr$, $t_1$ and $t_2$ (red point) appears before $p$ in $\pi$. }
    \label{fig:projection_addition_on_edge}
\end{figure}

In this case
\begin{equation}
    \Pr((p, r, t_1, t_2)\text{ is an edge of $C(\pi, p)$}) = \frac{3!(n-4-W_{qr}(t_1, t_2))!}{(n-W_qr(t_1,t_2))!}
\end{equation}
And we use $g(i)=\frac{3!(n-4-i)!}{(n-i)!}$. In this case, any pair $(t_1, t_2)$ with $t_1\neq t_2$ will contribute to result. So we can write
\begin{align*}
    &\sum_{\substack{t_1, t_2\in N\\e=(p,r,t_1, t_2)}}l(e)\psi(e)\Pr((p, r, t_1, t_2)\text{ is an edge of $C(\pi, p)$})\\
    =&\frac{1}{2}\sum_{p_k}\sum_{p_j\in R(p_k)}\|p-r\|\left(\frac{1}{2}-\frac{\theta_k-\theta_j}{2\pi}\right)g(W_{qr}(p_k, p_j))\\
    +&\frac{1}{2}\sum_{p_k}\sum_{p_j\in L(p_k)}\|p-r\|\left(\frac{1}{2}-\frac{\theta_j-\theta_k}{2\pi}\right)g(W_{qr}(p_j, p_k))\numberthis
\end{align*}
We have a factor of $\frac{1}{2}$ because each pair is counted twice. In the previous case, all the inner summations have been pre-computed. So for a fixed $(p, r)$, the summation can be computed in $O(n)$ time given previous computation. Hence in total it takes $O(n^3)$ time and no additional space to compute this case.

Having resolved all the cases, we present our main theorem:
\begin{theorem}
Shapley values for mean width for a point set in 3-D can be computed in $O(n^3\log^2{n})$ time and $O(n)$ space.
\end{theorem}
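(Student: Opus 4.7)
The plan is to assemble the bounds already established throughout the case analysis: by linearity of expectation, $\phi(p)$ splits into contributions from Cases~1, 2, and 3 according to the number of points preceding $p$ in $\pi$, and I would compute each contribution for every $p \in N$ in turn, then sum them.

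For Cases~1 and~2, I would use the closed-form expressions derived via \autoref{lemma:set_probability_formula}. Case~1 reduces to a summation over pairs $(q, p)$, giving $O(n^2)$ total time; Case~2 reduces to a summation over triples $(q, r, p)$, giving $O(n^3)$ total time. Both require only $O(n)$ space, since each $p$ is handled in a streaming fashion.

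For Case~3, which is the bottleneck, I would enumerate all $\binom{n}{2}$ pairs $(q, r)$ and, for each, project $N$ onto a plane along the direction of $qr$ and sort the remaining $n - 2$ points by polar angle in $O(n \log n)$ time. The expectation of the indicator difference in \autoref{eq:decompose_expectation} splits into three probability terms corresponding to edge removal, edge addition with $p$ as an apex, and edge addition with $p$ on the edge itself. For each term, the inner summation over $4$-tuples $(q, r, t_1, t_2)$ across all choices of $p$ would be expressed as a sliding-cone summation of the form $\sum_{p_j} f(j)\, g(W_{qr}(p_i, p_j))$, with $g$ as defined in \autoref{eq:g} (up to the minor factorial adjustments for the addition subcases). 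I would instantiate the dynamic queue of \autoref{corollary:dynamic_queue} with this $g$, and drive it by sweeping a ray around $qr$: each polar-angle event triggers a constant number of pushes, pops, and queries, and each point participates in only $O(1)$ such events per sweep. This gives $O(n \log^2 n)$ time and $O(n)$ space per pair $(q, r)$, hence $O(n^3 \log^2 n)$ time and $O(n)$ space for Case~3 overall.

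The main obstacle is the Case~3 reduction itself, which has already been carried out: namely, recognizing that all three visibility/edge-existence probabilities can be rewritten as sliding sums compatible with the restricted push/pop/query interface of the data structure of \autoref{thm:dynamic_convolution}, and verifying that transitioning from $S(i)$ to $S(i+1)$ costs only constant extra work beyond the maintained sweep. Once these reductions are in place, the overall bound follows by adding $O(n^2) + O(n^3) + O(n^3 \log^2 n) = O(n^3 \log^2 n)$ for time and taking the maximum $O(n)$ for space, completing the proof.
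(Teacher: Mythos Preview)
Your proposal is correct and follows essentially the same approach as the paper: the theorem is stated as the culmination of the Section~4 case analysis, and your summary of how Cases~1--3 combine (with Case~3 handled per pair $(q,r)$ via projection, polar sorting, and the dynamic-queue data structure of \autoref{corollary:dynamic_queue}) matches the paper's argument step for step. The paper gives no separate proof block for this theorem beyond the preceding subsections, so your assembly of the bounds is exactly what is needed.
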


\section{Discussion}
We have presented an algorithm to compute Shapley values in game theory, with respect to a point set in 3-D and the mean width of its convex hull. We provided an efficient algorithm based on a data structure for a variant of dynamic convolution. We believe the data structure may be of independent interest.

Our algorithm naturally extends to higher dimension to compute Shapley values for $M_{d-2}(\CH(P))$ for a $d$-dimensional point set $P$. This relies on the fact that the orthogonal space of a $(d-2)$-facet is a plane. In general, it takes $O(n^d\log^2{n})$ time to compute the Shapley values. It is also known that for a convex polytope $X$, $M_{d-1}(X)$ is equivalent to the $(d-1)$-volume of the boundary of $X$ up to some constant \cite{Miles69}. Hence the algorithm by Cabello and Chan \cite{Cabello19} with natural extension can be used to compute Shapley values for $M_{d-1}(\CH(P))$ in $O(n^d)$ time. It would be natural to ask whether there are efficient algorithms to compute $M_i(\CH(P))$ in general.

We can also consider $\epsilon$-coreset of Shapley values for geometric objects. Let $P$ be a set of geometric objects and $v$ be a characteristic function on $P$. We can define the $\epsilon$-coreset $\widetilde{P}$ to be a weighted set of geometric objects such that, for any geometric objects $x$, $(1-\epsilon)\phi_{P\cup\{x\}}(x)\leq \phi_{\widetilde{P}\cup\{x\}}(x) \leq (1+\epsilon)\phi_{P\cup\{x\}}(x)$ where $\phi_N$ means that the underlying player set for the Shapley value is $N$. Intuitively, $\phi_{P\cup\{x\}}(x)$ means the contribution $x$ makes when $x$ is added as an additional player. Does $\widetilde{P}$ exist? If so, what is the upper and lower bound of its size? How fast can we find a coreset?

\bibliography{ref}
\end{document}